\newtheorem{definition}{Definition}
\newtheorem{remark}{Remark}
\newtheorem{lemma}{Lemma}
\newtheorem{theorem}{Theorem}
\newtheorem{assumption}{Assumption}
\newtheorem{corollary}{Corollary}
\newcommand\gss{\mathcal{G}}
\newcommand\ptt{p}
\newcommand\jcost{\mathbb{J}} 
\title{Model-free optimal controller for discrete-time Markovian jump linear systems: A Q-learning approach} 
\author{Ehsan Badfar, Babak Tavassoli\thanks{K.N. Toosi University of Technology\\ (ehsan.badfar@email.kntu.ac.ir, tavassoli@kntu.ac.ir).}}
\begin{document}
\maketitle

\begin{abstract}
This research paper introduces a model-free optimal controller for discrete-time Markovian jump linear systems (MJLSs), employing principles from the methodology of reinforcement learning (RL). While Q-learning methods have demonstrated efficacy in determining optimal controller gains for deterministic systems, their application to systems with Markovian switching remains unexplored. To address this research gap, we propose a Q-function involving the Markovian mode. Subsequently, a Q-learning algorithm is proposed to learn the unknown kernel matrix using raw input-state information from the system. Notably, the study proves the convergence of the proposed Q-learning optimal controller gains to the model-based optimal controller gains after proving the convergence of a value iteration algorithm as the first step. Addition of excitation noise to input which is required to ensure the leaning performance does not lead to any bias. Unlike the conventional optimal controller, the proposed method does not require any knowledge on system dynamics and eliminates the need for solving coupled algebraic Riccati equations arising in optimal control of MJLSs. Finally, the efficiency of the proposed method is demonstrated through a simulation study.\\[6pt]
Keywords: Markovian Jump Systems; Optimal Control; Reinforcement Learning; Q-Learning.\\[6pt]
\end{abstract}

\section{Introduction}
\begingroup
\setlength{\parskip}{.2 em}               
Markovian systems constitute a practically significant category within stochastic switching systems. They can be employed to model dynamics that switch abruptly across a finite set of linear or nonlinear models, guided by a finite Markov chain \citep{shi2015survey,zhao2019brief}. They are widely used in various domains, encompassing modeling, stability analysis, and control within real-world applications, such as networked control systems \citep{gupta2009networked,zhang2015survey}, power systems \citep{sanjari2016probabilistic,correa2021cooperative}, and cyber-attacks \citep{wu2016survey,cao2019finite}.

During the past decade, several efforts have been directed towards the advancement of control methodologies for MJLS. For instance, in \cite{tzortzis2020jump}, the authors designed a Linear Quadratic Regulator (LQR) for MJLS with uncertain transition probabilities through a min-max optimization approach. In \cite{han2021optimal}, the authors introduced a finite horizon optimal controller for MJLS with input delay with stability guarantee in mean-square sense. In \cite{tang2024improved}, the authors proposed a novel augmentation technique to develop criteria for the finite-time controllability of time-varying delayed Markovian jump control networks. Additionally, other approaches such as sliding mode control \citep{qi2019sliding}, model predictive control \citep{zhang2022model}, event-triggered control \citep{zhou2024event}, and fuzzy control \citep{tan2024robust} have been successfully applied to control MJLS.

All the previously mentioned research works rely on the model of system to achieve the controller gain for a Markovian jump linear system (MJLS). In other words, it is essential to have complete knowledge of system dynamics to derive the control input signal. Since the dynamics of the system might be unavailable in many practical applications, the design of model-free controllers can be regarded as an important challenge for the control engineering community \citep{annaswamy2023control,dorfler2023data}.

An innovative approach to solve this challenging problem lies in the adoption of machine learning techniques, particularly RL \citep{bertsekas2019reinforcement}. In RL, an intelligent agent exchanges data with an unknown environment. The intelligent agent mainly aims to learn the policies by observing the response from the unknown environment. Recently, RL has demonstrated outstanding results across diverse application domains, including robotics \citep{polydoros2017survey,singh2022reinforcement}, transportation \citep{cao2020using,wei2021recent}, and economics \citep{charpentier2021reinforcement,hambly2023recent}. From a control engineering perspective, RL has a high potential to bridge the gap between conventional optimal control and data-driven control \citep{kiumarsi2017optimal,bucsoniu2018reinforcement,annaswamy2023adaptive}. To mention some of the approaches to applying RL methods in various control problems, \cite{rizvi2018output} formulates a model-free $H_{\infty}$ controller for a class of linear discrete-time system based on the input-output information from the system. In \cite{fan2019model}, the authors develop a model-free tracking controller for discrete-time linear systems through an off-policy method. \cite{farjadnasab2022model} employs RL techniques to develop LQR for a decentralized system. \cite{jiang2023reinforcement} uses RL to design a cooperative $H_{\infty}$ output regulation for multi-agent systems. \cite{huang2024specified} utilizes RL methods to develop zero tracking error optimal controller for linear discrete-time systems with user-specified convergence rate.

In contrast to the recent advances of RL in developing model-free optimal controllers for deterministic systems, the problem of designing model-free optimal controllers for an MJLS with RL has not yet been investigated. The primary objective of this research is to employ RL methods, specifically Q-learning, to develop a model-free controller for discrete-time MJLS. This controller must converge to the conventional optimal controller for MJLS without the need for prior knowledge of system dynamics or the offline solution of the Algebraic Riccati equation backward in time. Accordingly, the contributions of this research paper are summarized as follows. We introduce a model-free optimal controller for a linear discrete-time system with Markovian transitions. For this purpose, a novel formulation of value function and Q-function are suggested. It is demonstrated that the excitation noise during the learning process does not introduce any bias in the estimation of the kernel matrix. Additionally, the algorithm proposed in this research does not require an initial stabilizing control policy. Finally, it is proven that the proposed model-free controller converges to the model-based optimal controller, which is guaranteed to be mean-square stable.

The rest of this research paper is organized as follows. Section~\ref{section:introduction} briefly presents the MJLS and the corresponding conventional optimal controller. In Section~\ref{section:ValueIterationAlgorithm}, the value iteration algorithm for MJLS is developed. Section~\ref{section:Q-Learning} formulates Q-learning method and algorithm for MJLS using the least squares (LS) technique to learn the unknown kernel matrix of the Q-function. Furthermore, the convergence of the proposed method to the conventional optimal controller is proved. Then, a simulation study is presented in Section~\ref{section:Simulation} to demonstrate the efficacy of the proposed controller. Finally, the conclusions are provided in Section~\ref{section:conclusion}.

\vspace{\baselineskip}
\noindent \textbf{Notation:} The following standard notations are used in this paper. The sets of non-negative integers and real numbers are denoted by \( \mathbb{Z}_0^+ \) and \(\mathbb{R}\), respectively. Also, \( \mathbb{R}^n \) and \( \mathbb{R}^{n \times m} \) denote the \( n \)-dimensional Euclidean space and the set of \( n \times m \) real matrices, respectively. \( I \) is an identity matrix with appropriate dimensions. A set of \( N \) matrices \( A_1, A_2, \ldots, A_N \in \mathbb{R}^{m\times n} \) is denoted as an \( N \)-matrix \( A = (A_1, A_2, \ldots, A_N) \in \mathbb{R}^{m\times n\times N}\). \( \lVert x \rVert \) is the 2-norm of vector \( x \). 
For events $A$ and $B$, the probability of \( A \) and the conditional probability of \( A \) given \( B \) are denoted by \(\mathbb{P}[A] \) and \( \mathbb{P}[A|B] \), respectively.
For random variables \( x \) and \( y \), \( \mathbb{E}[x] \) is the expected value of \( x \) and \( \mathbb{E}[x|y] \) is the conditional expectation of \( x \) given \( y \).  Matrices are assumed to be compatible for algebraic operations unless their dimensions are explicitly specified.
\endgroup
\section{Background and problem statement} \label{section:introduction}
\label{Section2}
In this section, we first introduce the Markovian system under study, together with the stability notation. Then, we briefly present the model-based optimal controller for MJLS.
\subsection{System description}
The discrete-time MJLS in the state-space form is considered as 
\begin{equation}
\gss: \begin{cases}
x_{k+1} = A_{\theta_k} x_k + B_{\theta_k} u_k \\
y_k = C_{\theta_k} x_k\label{eq:sys_dynamics}
\end{cases}
\end{equation}
where $x_k \in \mathbb{R}^n$, $y_k \in \mathbb{R}^q$, and $u_k \in \mathbb{R}^m$ are the state, output, and control input vectors, respectively. The system matrices are $N$-matrices given as $A = (A_1, \dots, A_N)$, $B = (B_1, \dots, B_N)$, and $C = (C_1, \dots, C_N)$ with compatible dimensions. 
The Markovian jump parameter $\theta_k$ takes values from the finite set $\Theta = \{1, 2, \dots, N\}$ with transition probabilities  
\begin{equation}
\label{eq:tpr}
\ptt_{ij} = \mathbb{P}[\theta_{k+1}=j \mid \theta_k=i],
\end{equation}
where $\ptt_{ij} \in [0,1]$ for all $i,j \in \Theta$ and $\sum_{j=1}^N \ptt_{ij}=1$ for all $i \in \Theta$.
The probability that the Markovian mode is $i \in \Theta$ at time step $k$ is denoted as 
\begin{equation}
\pi_{k,i} = \mathbb{P}[\theta_k=i].
\end{equation}
which satisfies $\sum_{i=1}^N \pi_{k,i} = 1$ for all $k\in\mathbb{Z}_0^+$.
The probability distribution of $\theta_k$ is given by the vector $\pi_k\in\mathbb{R}^N$ defined such that $[\pi_k]_i=\pi_{k,i}$. The transition probability matrix is obtained as $[\Phi]_{ij}=\ptt_{ij}$ for $i,j\in\Theta$. Then, one can use \eqref{eq:tpr} to write
\begin{equation}
\pi_{k+1}^T = \pi_k^T \Phi \quad \forall k\in\mathbb{Z}_0^+.
\end{equation} 

If the limit probability distribution $\pi_\infty = \lim_{k\to\infty} \pi_k$ exists, then the Markov chain $\{\theta_k\}_{k=0}^\infty$ is called ergodic \citep{costa2006discrete}.

The overall state of MJLS at time step \( k \) is composed of the pair of system state and Markovian mode which is denoted as 
\begin{equation}
I_k = ( x_k, \theta_k ).
\end{equation}

The stability of system $\gss$ is studied in the context of mean-square stability in \cite{costa2006discrete}, which can be represented as the following definition for the system $\gss$ in \eqref{eq:sys_dynamics}.

\begin{definition}
\label{defin2}
The system $\gss$ is mean-square stable if the following relation holds for all $I_0 \in \mathbb{R}^n \times \Theta$.
\begin{equation}
\sum_{k=0}^{\infty} \mathbb{E}\left[\left\| x_k \right\|^2 \mid I_0 \right] < \infty.
\end{equation}
\end{definition}

\begin{definition}[\citealt{zhao2008practical}] 
\label{defin3}
The pair $(A,B)$ is mean-square stabilizable if there exists a control law in the form of $u_k=-K_{\theta_k} x_k$ with $K=(K_1,\dots,K_N) \in \mathbb{R}^{m \times n}$, such that the closed-loop system is mean-square stable.
\end{definition}

\subsection{Model-based Optimal Controller}
In the conventional optimal control design for MJLS, it is desired to find a stabilizing control law that minimizes an infinite horizon quadratic cost function
\begin{equation}
\label{eq:cost}
\jcost\left(x_0, \theta_0\right)=\mathbb{E}\left[\sum_{k=0}^{\infty} x_k^T Q_{\theta_k} x_k+u_k^T R_{\theta_k} u_k \mid I_0\right] 
\end{equation}
for the initial state $I_0=(x_0,\theta_0)$. The weighting matrices \( Q=(Q_1,\ldots,Q_N) \) and \( R=(R_1,\ldots,R_N) \) are \( N \)-matrices with appropriate dimensions such that $R_i>0$ and $Q_i\ge 0$ for $i\in\Theta$.

To facilitate the control design it is common to make the following reasonable assumptions \citep{costa2006discrete}.

\begin{assumption}
\label{assum2}
It is assumed that the state variables and Markovian modes of the system $\gss$ in \eqref{eq:sys_dynamics} are conditionally independent random variables such that 
\begin{equation}
\begin{split}
&\mathbb{E}[f(x_{k+1}) g(\theta_{k+1}) \mid I_k] \\
&= \mathbb{E}[f(x_{k+1}) \mid I_k] \mathbb{E}[g(\theta_{k+1}) \mid I_k].
\end{split}
\end{equation}
\end{assumption}

\begin{assumption}
\label{assum3}
It is assumed that the Markovian mode of the system $\gss$ in \eqref{eq:sys_dynamics} is ergodic such that for all $i,j\in\Theta$, the mode transition from $i$ to $j$ within a finite number of steps has a non-zero probability.
\end{assumption}

\begin{assumption}
\label{assum4}
It is assumed that the system $\gss$ is mean-square stabilizable.
\end{assumption}

Then, the following theorem from \cite[Theorem 4.5]{costa2006discrete} provides the model-based optimal controller for the MJLS $\gss$ through solution of some coupled algebraic Riccati equations (CARE).

\begin{theorem}[\citealt{costa2006discrete}]
\label{th:1}
If the system $\gss$ in \eqref{eq:sys_dynamics} satisfies the assumptions~\ref{assum2}, \ref{assum3}, and \ref{assum4}, then, the optimal value of the cost function \eqref{eq:cost} is achieved by applying the control policy given by
\begin{equation}
u^*_k = - K_{\theta_k} x_k. \label{eq:control_policy}
\end{equation}
with the control gains for $i\in\Theta$ defined as
\begin{equation}
K_i = \left( R_i + B_i^T \mathcal{E}_i(P) B_i \right)^{-1} B_i^T \mathcal{E}_i(P) A_i \label{eq:control_gain}
\end{equation}
where the \( N \)-matrix \( P=(P_1,\ldots,P_N)\in\mathbb{R}^{n\times n\times N} \) has positive definite elements that satisfy 
\begin{equation}
\begin{split}
P_i = & Q_i + A_i^T \mathcal{E}_i(P) A_i \\
 -& A_i^T \mathcal{E}_i(P) B_i \left( R_i + B_i^T \mathcal{E}_i(P) B_i \right)^{-1} B_i^T \mathcal{E}_i(P) A_i 
\end{split} 
\label{eq:riccati_equations}
\end{equation}
for $i\in\Theta$ with the operator \( \mathcal{E}_i(P) \) defined as 
\begin{equation}
\mathcal{E}_i(P) = \mathbb{E}\left[  P_{\theta_{k+1}}  \mid \theta_k = i \right] = \sum_{j=1}^{N} p_{ij} P_j. \label{eq:operator_epsilon}
\end{equation}
Moreover, the optimal cost function denoted by $\jcost^*$ is
\begin{equation}
\jcost^*(x_0, \theta_0) = x_0^T P_{\theta_0} x_0. \label{eq:opt_cost}
\end{equation}
\end{theorem}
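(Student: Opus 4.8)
The plan is to establish the result by dynamic programming, following the standard verification-theorem route for linear-quadratic problems adapted to the Markovian-jump setting. First I would posit the quadratic ansatz $V(x,i) = x^T P_i x$ for the optimal cost-to-go, with an $N$-matrix $P=(P_1,\dots,P_N)$ of positive definite matrices to be determined, and verify that it satisfies the Bellman optimality equation
\begin{equation}
x^T P_i x = \min_{u \in \mathbb{R}^m} \left\{ x^T Q_i x + u^T R_i u + \mathbb{E}\left[ V(x_{k+1}, \theta_{k+1}) \mid x_k = x,\ \theta_k = i,\ u_k = u \right] \right\}.
\end{equation}

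Next I would evaluate the conditional expectation on the right-hand side. Since $x_{k+1} = A_i x + B_i u$ is a deterministic function of $(x_k,\theta_k,u_k)$, and since Assumption~\ref{assum2} lets the mode-expectation factor out, one gets $\mathbb{E}[V(x_{k+1},\theta_{k+1}) \mid x_k = x, \theta_k = i, u_k = u] = (A_i x + B_i u)^T \mathcal{E}_i(P)(A_i x + B_i u)$ with $\mathcal{E}_i(P)$ as in \eqref{eq:operator_epsilon}. The bracketed expression is then a strictly convex quadratic in $u$ (because $R_i>0$ and $\mathcal{E}_i(P)\ge 0$), so its unique minimizer is obtained by setting the gradient to zero, which gives $u^*=-K_i x$ with $K_i$ exactly as in \eqref{eq:control_gain}. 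Substituting $u^*$ back into the Bellman equation and matching the quadratic forms in $x$ yields precisely the coupled Riccati system \eqref{eq:riccati_equations}, and the optimal cost formula \eqref{eq:opt_cost} follows by taking $x=x_0$, $i=\theta_0$.

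The substantive part, and the step I expect to be the main obstacle, is to show that the coupled Riccati equations \eqref{eq:riccati_equations} actually admit an $N$-matrix solution $P$ with positive definite entries and that the induced gain $K$ is mean-square stabilizing, so that the formal dynamic-programming computation above is legitimate rather than merely suggestive. I would obtain existence by a value-iteration / monotone-convergence argument: starting from $P^{(0)}=0$ and iterating the Riccati map (equivalently, solving finite-horizon problems of increasing length), the sequence $\{P^{(\ell)}\}$ is monotone nondecreasing in the positive-semidefinite order; Assumption~\ref{assum4} (mean-square stabilizability) furnishes a uniform upper bound on the finite-horizon costs and hence on $P^{(\ell)}$, so the limit $P$ exists and solves \eqref{eq:riccati_equations}. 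Assumption~\ref{assum3} (ergodicity), together with the positivity hypotheses $Q_i\ge 0$, $R_i>0$, is then invoked to obtain positive definiteness of the $P_i$ and mean-square stability of the closed loop under $K$ (this plays the role of the detectability-type hypothesis behind the converse Lyapunov statement for MJLS).

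Finally I would close with a verification step: for an arbitrary admissible control, a telescoping / Dynkin-type identity applied to $\mathbb{E}[x_k^T P_{\theta_k} x_k]$ over $k=0,\dots,T$ shows that the truncated cost is at least $x_0^T P_{\theta_0} x_0 - \mathbb{E}[x_T^T P_{\theta_T} x_T]$, with equality under $u_k=-K_{\theta_k}x_k$; letting $T\to\infty$ and using mean-square stability of the optimal closed loop to kill the terminal term $\mathbb{E}[x_T^T P_{\theta_T} x_T]\to 0$ gives optimality of \eqref{eq:control_policy} with value \eqref{eq:opt_cost}. Since the statement is quoted from \citet{costa2006discrete}, the remaining bookkeeping — precise admissibility class, measurability of the mode-dependent feedback, and interchange of limit and expectation — can be imported directly from that reference rather than reproduced here.
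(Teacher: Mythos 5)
The paper does not actually prove Theorem~\ref{th:1}: it is imported verbatim from \cite{costa2006discrete} (their Theorem 4.5), so there is no in-paper argument to compare against. That said, your outline is the standard and correct route to this result, and it is essentially the one the cited reference follows and that the paper itself partially re-derives later: the quadratic ansatz, the evaluation of $\mathbb{E}[V(x_{k+1},\theta_{k+1})\mid I_k]$ via $\mathcal{E}_i(P)$, the completion of the square yielding \eqref{eq:control_gain} and \eqref{eq:riccati_equations}, and the monotone value-iteration existence argument are precisely the content of the paper's Section~3 (Lemmas~\ref{lemma1}--\ref{lemma3}, Theorem~\ref{th:2}, Corollary~\ref{col:1}), while the telescoping verification step is the standard closing move in \cite{costa2006discrete}. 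The one soft spot in your sketch is the claim that ergodicity (Assumption~\ref{assum3}) together with $Q_i\ge 0$, $R_i>0$ yields \emph{positive definiteness} of the $P_i$ and mean-square stability of the closed loop: ergodicity of the mode chain is not a detectability condition on the pair $(Q^{1/2},A)$, and with merely $Q_i\ge 0$ the limit of the value iteration is in general only positive semidefinite. The reference handles this with an explicit mean-square detectability hypothesis (or $Q_i>0$); as written, the theorem's assertion of positive definite $P_i$ needs that extra hypothesis, and your proof should either assume it or note that only semidefiniteness follows. Everything else in your plan is sound and matches the source.
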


\subsection{Problem statement}
The controller gains in \eqref{eq:control_gain} are computed based on the information of the MJLS model in \eqref{eq:sys_dynamics}. 
The objective of this work is to compute the same set of controller gains only based on the measured system variables data and without need to use equation \eqref{eq:control_gain}.
In other words, no knowledge of the coefficients in \eqref{eq:sys_dynamics} or solution of the CARE in \eqref{eq:riccati_equations} are required.
The following additional assumptions are made.

\begin{assumption}
\label{assum5}
It is assumed that the information set $\mathcal{I}_k=\{I_0,I_1,\cdots,I_k\}$ on the system state variables is available to the controller at every time step $k\in\mathbb{Z}_0^+$.
\end{assumption}
\begin{assumption}
\label{assum1}
It is assumed that the transition probabilities in \eqref{eq:tpr} for the Markovian mode of $\gss$ in \eqref{eq:sys_dynamics} are known and available.
\end{assumption}
It is noteworthy that Assumption~\ref{assum1} is not restrictive, because the transition probabilities can be easily estimated empirically considering that the Markovian mode is measurable according to Assumption~\ref{assum5}.

\section{Value iteration for MJLS}\label{section:ValueIterationAlgorithm}
In this section, we employ the value iteration method to determine the optimal controller gain without the need to solve the CARE in \eqref{eq:riccati_equations}. Subsequently, we prove that the value function converges to the optimal value function along with the algorithm iterations. 

\subsection{Value iteration algorithm}
In this subsection, the Bellman principle of optimality is employed to introduce the value iteration algorithm for MJLS. The proposed method obtains the optimal control policies iteratively.
As the first step, the quadratic value function for the MJLS with a given controller is written as
\begin{equation}
\jcost(x_k, \theta_k) = \mathbb{E}\left[ \sum_{i=k}^{\infty}  x_i^T Q_{\theta_i} x_i + u_i^T R_{\theta_i} u_i  \mid I_k \right].\label{eq:quadratic_value_function}
\end{equation}

The cost function \eqref{eq:cost} is the above value function for $k=0$.
According to Equation~\eqref{eq:opt_cost} in Theorem~\ref{th:1}, the optimal value function $\jcost^*$ which is achieved by using the control gains \eqref{eq:control_gain} also takes the quadratic form
\begin{equation}
\jcost^*(x_k, \theta_k) = x_k^T P_{\theta_k} x_k. \label{eq:opt_value}
\end{equation}

Using the iterated conditional mathematical expectation \citep{papoulis2002probability}, the value function (\ref{eq:quadratic_value_function}) can be decomposed as
\begin{equation}
\begin{split}
& \jcost\left(x_k, \theta_k\right) =\mathbb{E}\bigg[x_k^T Q_{\theta_k} x_k + u_k^T R_{\theta_k} u_k +\\
& \qquad \mathbb{E}\Big[\sum_{i=k+1}^{\infty} x_i^T Q_{\theta_i} x_i + u_i^T R_{\theta_i} u_i \mid I_{k+1}\Big] \mid I_k\bigg] \text{.}
\end{split}
\end{equation}

By the definition of the value function (\ref{eq:quadratic_value_function}), the above equation can be expressed as
\begin{equation}
\begin{split}
\jcost(x_k,\theta_k)=x_k^T Q_{\theta_k} x_k&+u_k^T R_{\theta_k} u_k\\[3pt]
&+\mathbb{E}[ \jcost(x_{k+1},\theta_{k+1})| I_k]. \label{eq:Bellman}
\end{split}
\end{equation}

The above equation is recognized as the Bellman equation for system $\gss$. Based on the Bellman principle of optimality, we have 
\begin{equation}
\begin{aligned}
\jcost^* (x_k, \theta_k) = &\min_{u_k} \Big[ x_k^T Q_{\theta_k} x_k + u_k^T R_{\theta_k} u_k \\
&\quad\quad +\mathbb{E}\big[ J^* (x_{k+1}, \theta_{k+1}) \mid I_k \big] \Big]. \label{eq:HJB}
\end{aligned}
\end{equation}

This equation is a representation of the Hamiltonian-Jacobi-Bellman (HJB) equation for MJLS. According to the above equation, the optimal control policy is calculated as 
\begin{equation}
\begin{aligned}
u^*_k = \text{arg}\min_u \big[&x_k^T Q_{\theta_k} x_k + u_k^T R_{\theta_k} u_k \\
& + \mathbb{E}[\jcost^*(x_{k+1},\theta_{k+1}) | I_k ]\,\big]. 
\end{aligned}
\end{equation}

The value iteration algorithm computes sequences of control laws and value functions according to 
\begin{equation}
\begin{split}
u^j(x_k,\theta_k)=\text{arg}\min_{u_k}\big[&x_k^T Q_{\theta_k} x_k+u_k^T R_{\theta_k} u_k\\
&+\mathbb{E}[\jcost^j(x_{k+1},\theta_{k+1}) | I_k ]\,\big] \label{eq:control_policy_VI_algo}
\end{split}
\end{equation}
\begin{equation}
\label{eq:value_function}
\begin{split}
\jcost^{j+1}(x_k,\theta_k) =& x_k^T Q_{\theta_k} x_k + u^{j^T}(x_k, \theta_k) R_{\theta_k} u^j(x_k, \theta_k) \\
& +\mathbb{E}[\jcost^j(x_{k+1}, \theta_{k+1}) \mid I_k] 
\end{split}
\end{equation}
in which $j\in\mathbb{Z}_0^+$ is the iteration index and the initialization for $i=0$ is made as $\jcost^0(x_k,\theta_k)=0$.
The two equations \eqref{eq:control_policy_VI_algo} and \eqref{eq:value_function} can be combined as
\begin{equation}
\begin{split}
\jcost^{j+1}(x_k,\theta_k) = x_k^T Q_{\theta_k}& x_k + \min_{u_k} ~\big[~ u_k^T R_{\theta_k} u_k \\
& +\mathbb{E}[\jcost^j(x_{k+1}, \theta_{k+1}) \mid I_k] \,\big]
\label{eq:vi_combined}
\end{split}
\end{equation}

\subsection{Convergence proof of value iteration algorithm}
In this subsection, it is proved that the sequence of value functions given by \eqref{eq:value_function} converges to the optimal value function, i.e., $\jcost^j(x_k, \theta_k) \rightarrow \jcost^*(x_k, \theta_k)$, as $i \rightarrow \infty$. Let $\mu^j(x_k, \theta_k)$ be an arbitrary sequence of control policies, and a sequence $\Lambda^j(x_k, \theta_k)$, with the initial condition $\Lambda^0(x_k, \theta_k) = 0$, be recursively defined as
\begin{equation}\label{eq:lambda}
\begin{aligned}
\Lambda^{j+1}(x_k, \theta_k) &= x_k^T Q_{\theta_k} x_k + \mu^{j^T}(x_k, \theta_k) R_{\theta_k} \mu^j(x_k, \theta_k)\\
&\quad+ \mathbb{E}[\Lambda^j(x_{k+1}, \theta_{k+1}) \mid I_k].
\end{aligned} \raisetag{1.5\baselineskip}
\end{equation}

The following lemmas are necessary to prove the convergence of the value iteration algorithm for MJLS.

\begin{lemma}
\label{lemma1}
Let the value functions $\jcost^j(x_k, \theta_k)$ and $\Lambda^j(x_k, \theta_k)$ for the $j$-th iteration step be updated as in \eqref{eq:value_function} and \eqref{eq:lambda} with $\jcost^0(x_0, \theta_0) = \Lambda^0(x_0, \theta_0) = 0$. Then, the following inequality holds for all $j\in\mathbb{Z}_0^+$.
\begin{equation}\label{eq:22}
\jcost^j(x_k, \theta_k) \leq \Lambda^j(x_k, \theta_k).
\end{equation}
\end{lemma}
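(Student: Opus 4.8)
The plan is a straightforward induction on the iteration index $j$, built on the combined recursion \eqref{eq:vi_combined} for $\jcost^{j+1}$, the defining recursion \eqref{eq:lambda} for $\Lambda^{j+1}$, and the monotonicity of conditional expectation.

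First I would dispose of the base case $j=0$: both sequences start from zero, so $\jcost^0(x_k,\theta_k)=0=\Lambda^0(x_k,\theta_k)$ and \eqref{eq:22} holds with equality. For the inductive step I would assume that $\jcost^j(x,\theta)\le\Lambda^j(x,\theta)$ holds for every $(x,\theta)\in\mathbb{R}^n\times\Theta$ (not merely along a particular trajectory), fix a pair $(x_k,\theta_k)$, and note that in \eqref{eq:vi_combined} the minimization ranges over all admissible $u_k$, while the next state feeding the conditional expectation is $x_{k+1}=A_{\theta_k}x_k+B_{\theta_k}u_k$. Since $u_k=\mu^j(x_k,\theta_k)$ is one admissible choice, producing the state $\tilde{x}_{k+1}=A_{\theta_k}x_k+B_{\theta_k}\mu^j(x_k,\theta_k)$, the minimum is bounded above by the bracket evaluated at this choice, giving
\[
\begin{aligned}
\jcost^{j+1}(x_k,\theta_k) &\le x_k^T Q_{\theta_k} x_k+\mu^{j^T}(x_k,\theta_k) R_{\theta_k}\mu^j(x_k,\theta_k)\\
&\quad+\mathbb{E}\big[\jcost^j(\tilde{x}_{k+1},\theta_{k+1})\mid I_k\big].
\end{aligned}
\]
Then the induction hypothesis applied pointwise gives $\jcost^j(\tilde{x}_{k+1},\theta_{k+1})\le\Lambda^j(\tilde{x}_{k+1},\theta_{k+1})$, and since $\mathbb{E}[\cdot\mid I_k]$ preserves inequalities, replacing $\jcost^j$ by $\Lambda^j$ on the right-hand side can only increase it; the resulting expression is precisely $\Lambda^{j+1}(x_k,\theta_k)$ by \eqref{eq:lambda}, which closes the induction.

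I expect no real obstacle here; the argument is essentially careful bookkeeping. The one point deserving attention is that, once the minimization over $u_k$ is dropped in favour of the specific control $\mu^j(x_k,\theta_k)$, the next state inside the expectation must be propagated with that same control in both the $\jcost^j$ term and the $\Lambda^j$ term, so that the pointwise induction hypothesis can be invoked under a common conditional expectation. Since the dynamics \eqref{eq:sys_dynamics} contain no process noise, $x_{k+1}$ is a deterministic function of $(x_k,\theta_k,u_k)$, so this matching is immediate and the only randomness handled by $\mathbb{E}[\cdot\mid I_k]$ is that of $\theta_{k+1}$.
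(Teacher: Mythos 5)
Your proof is correct and follows essentially the same route as the paper: induction on $j$, with the inductive step comparing the minimizing control in \eqref{eq:vi_combined} against the arbitrary policy $\mu^j$ in \eqref{eq:lambda}. In fact your write-up is more careful than the paper's, which asserts $\jcost^{j+1}\le\Lambda^{j+1}$ directly from the minimization without explicitly invoking the induction hypothesis and the monotonicity of the conditional expectation, both of which you supply.
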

\begin{proof}
The proof begins by setting the initial conditions $\jcost^0(x_k, \theta_k) = \Lambda^0(x_k, \theta_k) = 0$. This establishes that the inequality \eqref{eq:22} holds at the initial step. Now, it is assumed that the inequality holds for some arbitrary step $j$, i.e., $\jcost^j(x_k, \theta_k) \leq \Lambda^j(x_k, \theta_k)$. It is necessary to show that this implies the inequality also holds for the next iteration step $i+1$. Based on the definitions of value functions, $\jcost^{j+1}(x_k, \theta_k)$ is obtained by minimizing the right-hand side of \eqref{eq:value_function} with respect to the optimal control policy, while $\Lambda^{j+1}(x_k, \theta_k)$ is the result of an arbitrary control policy. Therefore, it is concluded that
\begin{equation}\label{eq:23}
\jcost^{j+1}(x_k, \theta_k) \leq \Lambda^{j+1}(x_k, \theta_k).
\end{equation}
Hence, mathematical induction ensures that the inequality \eqref{eq:22} holds for all iteration steps. The proof is complete. 
\end{proof}

\begin{lemma}
\label{lemma2}
Consider the sequence of value functions $\jcost^j(x_k, \theta_k)$ given by \eqref{eq:value_function}. If there exists a control policy $\mu(x_k, \theta_k)$ that stabilizes the system $\gss$ in the mean-square sense, then there exists an upper bound $Y(x_k, \theta_k)$ such that $0 \leq \jcost^j(x_k, \theta_k) \leq \jcost^*(x_k, \theta_k) \leq Y(x_k, \theta_k)$ for all $j \in \mathbb{Z}_0^+$.
\end{lemma}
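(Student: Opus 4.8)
The plan is to establish the three inequalities $0\le\jcost^j$, $\jcost^j\le\jcost^*$, and $\jcost^*\le Y$ separately, each for all $j\in\mathbb{Z}_0^+$, the first two by induction on the iteration index and the third by comparison with the cost incurred by the assumed stabilizing policy.

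First I would prove the leftmost inequality $0\le\jcost^j(x_k,\theta_k)$ by induction on $j$. The base case is immediate from the initialization $\jcost^0\equiv0$. For the inductive step, if $\jcost^j\ge0$, then reading off \eqref{eq:value_function} the three summands forming $\jcost^{j+1}$ are all nonnegative: $x_k^TQ_{\theta_k}x_k\ge0$ because $Q_i\ge0$, the control cost is nonnegative because $R_i>0$, and $\mathbb{E}[\jcost^j(x_{k+1},\theta_{k+1})\mid I_k]\ge0$ by the induction hypothesis and monotonicity of conditional expectation. Hence $\jcost^{j+1}\ge0$.

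Next, for $\jcost^j\le\jcost^*$ I would again induct on $j$, this time using the combined value-iteration recursion \eqref{eq:vi_combined} together with the HJB equation \eqref{eq:HJB}. The base case holds since $\jcost^0\equiv0\le\jcost^*$, the latter being nonnegative because by \eqref{eq:opt_value} it equals $x_k^TP_{\theta_k}x_k$ with $P_{\theta_k}$ positive definite. Given $\jcost^j(\cdot,\cdot)\le\jcost^*(\cdot,\cdot)$ pointwise, taking conditional expectation over $(x_{k+1},\theta_{k+1})$ given $I_k$, adding $u_k^TR_{\theta_k}u_k$, and minimizing over $u_k$ each preserve the inequality; substituting the result into \eqref{eq:vi_combined} and recognizing the right-hand side via \eqref{eq:HJB} gives $\jcost^{j+1}(x_k,\theta_k)\le\jcost^*(x_k,\theta_k)$.

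Finally, for the rightmost inequality I would take $Y(x_k,\theta_k)$ to be the value \eqref{eq:quadratic_value_function} of the quadratic cost actually incurred when the mean-square stabilizing policy $\mu(x_k,\theta_k)=-K_{\theta_k}x_k$ (whose existence is the hypothesis, cf.\ Assumption~\ref{assum4} and Definition~\ref{defin3}) is applied from $(x_k,\theta_k)$ onward. Since $\jcost^*$ is, by Theorem~\ref{th:1}, the minimum of this cost over admissible policies, we immediately obtain $\jcost^*\le Y$; the content is to verify $Y<\infty$. I expect this to be the only real subtlety: the hypothesis only asserts mean-square stability, which a priori controls just the state, namely $\sum_{i\ge k}\mathbb{E}[\|x_i\|^2\mid I_k]<\infty$ along the closed-loop trajectory (Definition~\ref{defin2}). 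The key observation is that the stabilizing control is a mode-dependent linear function of the state, so $\|u_i\|^2\le(\max_\ell\|K_\ell\|^2)\,\|x_i\|^2$ forces $\sum_{i\ge k}\mathbb{E}[\|u_i\|^2\mid I_k]<\infty$ as well; since there are only finitely many uniformly bounded weight matrices $Q_\ell\ge0$, $R_\ell>0$, the series defining $Y$ is dominated by a constant multiple of these two convergent series and is therefore finite. The two induction arguments are routine.
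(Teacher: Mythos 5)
Your proof is correct, but the middle inequality is established by a genuinely different route than the paper's. For $\jcost^j \le \jcost^*$ you induct directly on $j$ using the monotonicity of the value-iteration operator in \eqref{eq:vi_combined} together with the fact that $\jcost^*$ is a fixed point of that operator by the HJB equation \eqref{eq:HJB}; this bypasses the auxiliary sequence $\Lambda^j$ of \eqref{eq:lambda} and Lemma~\ref{lemma1} entirely. The paper instead runs the comparison sequence $\Lambda^j$ under a fixed stabilizing policy $\mu$, telescopes it into the partial sums of the closed-loop cost to get $\Lambda^j \le Y$, invokes Lemma~\ref{lemma1} to conclude $\jcost^j \le Y$ for \emph{every} stabilizing $\mu$, and then specializes $\mu = u^*$ (so that $Y = \jcost^*$) to extract $\jcost^j \le \jcost^*$. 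Both arguments ultimately lean on Theorem~\ref{th:1} for the existence and characterization of $\jcost^*$, so neither is more self-contained; yours is the more standard and economical fixed-point/monotonicity argument, while the paper's $\Lambda$--$Y$ machinery has the side benefit of being reused verbatim in Lemma~\ref{lemma3} and of making the bound $Y$ an explicit object. Your treatment of $Y < \infty$ is actually more careful than the paper's one-line assertion: you correctly flag that mean-square stability per Definition~\ref{defin2} only controls $\sum_i \mathbb{E}[\|x_i\|^2 \mid I_k]$, and that the input series is then dominated because the stabilizing policy of Definition~\ref{defin3} is mode-dependent linear feedback with finitely many gains; this is a gap the paper leaves implicit. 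One small caveat: your induction for $\jcost^{j+1} \le \jcost^*$ uses \eqref{eq:vi_combined}, i.e.\ that $u^j$ in \eqref{eq:value_function} is the exact minimizer from \eqref{eq:control_policy_VI_algo}; that is how the paper defines the iteration, so this is fine, but it is worth stating explicitly since Lemma~\ref{lemma1} is precisely the device the paper uses to handle this point.
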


\begin{proof}
Let $\mu(x_k, \theta_k)$ be any stabilizing control policy, and set $\mu^j (x_k, \theta_k)$ in \eqref{eq:lambda} to $\mu(x_k, \theta_k)$ for $j \in \mathbb{Z}_0^+$. The difference between two consecutive iteration steps of $\Lambda^j (x_k, \theta_k)$ can be written as
\begin{equation*}
\begin{aligned}
&\Lambda^{j+1} (x_k, \theta_k) - \Lambda^j (x_k, \theta_k) \\
&= \mathbb{E}[\Lambda^j (x_{k+1}, \theta_{k+1}) \mid I_k] - \mathbb{E}[\Lambda^{j-1} (x_{k+1}, \theta_{k+1}) \mid I_k] \\
&= \mathbb{E}[\Lambda^{j-1} (x_{k+2}, \theta_{k+2}) \mid I_k] - \mathbb{E}[\Lambda^{j-2} (x_{k+2}, \theta_{k+2}) \mid I_k] \\
&= \mathbb{E}[\Lambda^{j-2} (x_{k+3}, \theta_{k+3}) \mid I_k] - \mathbb{E}[\Lambda^{j-3} (x_{k+3}, \theta_{k+3}) \mid I_k] \\
&\quad \vdots \\
&= \mathbb{E}[\Lambda^1 (x_{k+i}, \theta_{k+i}) \mid I_k] - \mathbb{E}[\Lambda^0 (x_{k+i}, \theta_{k+i}) \mid I_k].
\end{aligned}
\end{equation*}

Considering that $\Lambda^0$ is zero, the result is
\begin{equation*}
\Lambda^{j+1} (x_k, \theta_k) - \Lambda^j (x_k, \theta_k) = \mathbb{E}[\Lambda^1 (x_{k+i}, \theta_{k+i}) \mid I_k].
\end{equation*}

Summing up the above equation, we obtain
\begin{align}
\label{eq:lambdasum}
\Lambda^{j+1} (x_k, \theta_k) &=\sum_{n=0}^i \mathbb{E}[\Lambda^1 (x_{k+n}, \theta_{k+n}) | I_k ].
%\nonumber \\  &\leq \sum_{n=0}^\infty \mathbb{E}[\Lambda^1 (x_{k+n}, \theta_{k+n}) | I_k ].
\end{align}

By the definition \eqref{eq:lambda}, we have 
\begin{equation} 
\Lambda^1(x,\theta) = x^T Q_\theta x + \mu^T(x,\theta) R_\theta \mu(x,\theta).
\end{equation} 

By defining
\begin{align} 
\label{eq:defy}
Y(x_k,& \theta_k) = \sum_{i=0}^\infty \mathbb{E}[
x_{k+i}^T Q_{\theta_{k+i}} x_{k+i} + \nonumber\\
&\mu^T(x_{k+i},\theta_{k+i}) R_{\theta_{k+i}} \mu(x_{k+i},\theta_{k+i})
|I_k ] 
\end{align}
one can deduce from \eqref{eq:lambdasum} that
\begin{equation} 
\Lambda^{j} (x_k, \theta_k)\leq Y(x_k, \theta_k) \qquad \forall j\in\mathbb{Z}_0^+
\end{equation}

Since the control policy \( \mu(x_k, \theta_k) \) stabilizes the system $\gss$ it is also concluded that
\begin{equation} 
\label{eq:ybnd}
Y(x_k, \theta_k) < \infty. 
\end{equation}

Using Lemma~\ref{lemma1}, we have
\begin{equation}
\label{eq:jley}
\jcost^{j} (x_k, \theta_k) \leq Y(x_k, \theta_k).
\end{equation}

According to \eqref{eq:defy} and \eqref{eq:quadratic_value_function}, we have $Y(x_k,\theta_k)=\jcost(x_k,\theta_k)$ if we set $u_k = \mu(x_k, \theta_k)$. 
Hence, in the special case $\mu(x_k, \theta_k) = u^*_k$ the value of $Y$ is minimized and becomes equal to $\jcost^*$ which implies 
\begin{equation}
\label{eq:jsley}
\jcost^*(x_k,\theta_k) \le Y(x_k,\theta_k).
\end{equation}

Since $Y$ is equal to $\jcost^*$ in the special case $\mu(x_k, \theta_k) = u^*_k$, the inequality \eqref{eq:jley} implies 
\begin{equation}
\label{eq:jlejs}
\jcost^{j} (x_k, \theta_k) \leq \jcost^*(x_k,\theta_k).
\end{equation}

It is concluded recursively from \eqref{eq:value_function} that $\jcost^{j}$ is non-negative for every $j\in\mathbb{Z}_0^+$ which together with \eqref{eq:ybnd}, \eqref{eq:jsley} and \eqref{eq:jlejs} completes the proof.
\end{proof}

\begin{lemma}
\label{lemma3}
Consider the sequence of value functions $\jcost^j (x_k, \theta_k)$ in \eqref{eq:value_function} with the control policies in \eqref{eq:control_policy_VI_algo} and $\jcost^0 (x_k, \theta_k) = 0$. Then, the sequence of value functions is non-decreasing, such that $\jcost^{j+1} (x_k, \theta_k) \geq \jcost^j (x_k, \theta_k) $ for all $j \in \mathbb{Z}_0^+$.
\end{lemma}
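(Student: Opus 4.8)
The plan is to proceed by induction on the iteration index $j$, relying on the combined update \eqref{eq:vi_combined} and on two elementary monotonicity properties: conditional expectation preserves pointwise inequalities between functions, and the pointwise minimum over $u_k$ is order-preserving. No auxiliary construction beyond these is needed; in particular Lemmas~\ref{lemma1} and \ref{lemma2} are not invoked.

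For the base case $j=0$ I would substitute $\jcost^0(x_k,\theta_k)=0$ into \eqref{eq:vi_combined}; the inner minimization then collapses to $\min_{u_k} u_k^T R_{\theta_k} u_k$, which vanishes at $u_k=0$ because $R_{\theta_k}>0$, so $\jcost^1(x_k,\theta_k)=x_k^T Q_{\theta_k} x_k\ge 0=\jcost^0(x_k,\theta_k)$ for every $(x_k,\theta_k)$, using $Q_{\theta_k}\ge 0$.

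For the inductive step I would assume $\jcost^j(x,\theta)\ge\jcost^{j-1}(x,\theta)$ holds \emph{for all} $(x,\theta)\in\mathbb{R}^n\times\Theta$; this uniformity is precisely what makes the hypothesis applicable to the random successor pair $(x_{k+1},\theta_{k+1})$. Fixing $u_k$, the successor state $x_{k+1}=A_{\theta_k}x_k+B_{\theta_k}u_k$ is determined given $I_k$, so the hypothesis yields $\jcost^j(x_{k+1},\theta_{k+1})\ge\jcost^{j-1}(x_{k+1},\theta_{k+1})$ and hence $\mathbb{E}[\jcost^j(x_{k+1},\theta_{k+1})\mid I_k]\ge\mathbb{E}[\jcost^{j-1}(x_{k+1},\theta_{k+1})\mid I_k]$. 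Adding $u_k^T R_{\theta_k} u_k$ to both sides, then taking $\min_{u_k}$ (which preserves the inequality), then adding $x_k^T Q_{\theta_k} x_k$, and reading off $\jcost^{j+1}$ and $\jcost^{j}$ from \eqref{eq:vi_combined}, I would obtain $\jcost^{j+1}(x_k,\theta_k)\ge\jcost^j(x_k,\theta_k)$, which closes the induction.

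There is no genuine obstacle here; the only point demanding care — and worth flagging explicitly in the write-up — is that the induction must be carried out uniformly in $(x,\theta)$, so that the comparison survives substitution of the random pair $(x_{k+1},\theta_{k+1})$, and that the dependence of the successor state on $u_k$ causes no difficulty because $u_k$ is held fixed while the inequality is passed through the conditional expectation, the minimization over $u_k$ being performed only afterwards.
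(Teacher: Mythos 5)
Your proof is correct, but it takes a genuinely different route from the paper. The paper does not argue directly on $\jcost^j$; instead it introduces the auxiliary sequence $\Lambda^j$ of \eqref{eq:lambda} with the shifted policy choice $\mu^{j-1}(x_k,\theta_k)=u^{j}(x_k,\theta_k)$, proves by induction that $\jcost^{j+1}(x_k,\theta_k)\ge\Lambda^{j}(x_k,\theta_k)$ (the stage costs cancel when the two recursions are subtracted, leaving only the conditional expectation of the previous-step difference), and then invokes Lemma~\ref{lemma1}, i.e.\ $\jcost^{j}\le\Lambda^{j}$, to sandwich and conclude $\jcost^{j+1}\ge\Lambda^{j}\ge\jcost^{j}$. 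You instead run the induction directly on $\jcost^{j+1}\ge\jcost^{j}$ using the combined update \eqref{eq:vi_combined} and the two monotonicity facts that conditional expectation and $\min_{u_k}$ both preserve pointwise inequalities; this is the standard ``monotone Bellman operator'' argument. Both are valid. Your version is more self-contained (it needs neither $\Lambda^j$ nor Lemma~\ref{lemma1}) and arguably cleaner, since it avoids the paper's slightly delicate bookkeeping of which policy is substituted into which recursion at which index; your explicit remarks that the inductive hypothesis must hold uniformly in $(x,\theta)$ so it can be applied to the random successor pair, and that $u_k$ is held fixed until after the expectation is compared, are exactly the points that need care and are handled correctly. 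The paper's route buys reuse of the $\Lambda^j$ machinery that it has already set up for Lemma~\ref{lemma2}, which keeps the three lemmas stylistically uniform, but it is not logically necessary for this statement. One minor point worth stating in a careful write-up of your version: the inequality $\min_u f(u)\ge\min_u g(u)$ whenever $f\ge g$ pointwise holds even if the minima are only infima, so no attainment argument is needed; in the present quadratic setting the minimizer does exist anyway since $R_{\theta_k}>0$ and the kernels of $\jcost^j$ are positive semidefinite.
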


\begin{proof}
We attepmt to show by mathematical induction that $\Lambda^j (x_k, \theta_k) \leq \jcost^{j+1} (x_{k+1}, \theta_{k+1})$ for all $j\in\mathbb{Z}_0^+$. The induction starts by considering $\Lambda^0 (x_k, \theta_k) = \jcost^0 (x_k, \theta_k) = 0$ and using \eqref{eq:value_function} to write
\begin{equation}
\begin{split}
&\jcost^1 (x_k, \theta_k) - \Lambda^0 (x_k, \theta_k) \\
&= x_k^T Q_{\theta_k} x_k + u^{j^T} (x_k, \theta_k) R_{\theta_k} u^j (x_k, \theta_k) \geq 0
\end{split}
\end{equation}
which implies $\jcost^1 (x_k, \theta_k) \geq \Lambda^0 (x_k, \theta_k) $. 
Now, assume that $\jcost^j (x_k, \theta_k) \geq \Lambda^{j-1} (x_k, \theta_k)$ for some $i>0$. By setting  $\mu^{j-1} (x_k, \theta_k) =  u^{j} (x_k, \theta_k)$, and subtracting the lagged version of \eqref{eq:lambda} (in which $j$ is replaced by $j-1$) from \eqref{eq:value_function}, we have
\begin{equation}
\begin{split}
&\jcost^{j+1} (x_k, \theta_k) - \Lambda^j (x_k, \theta_k) \\
&= \mathbb{E}[\jcost^j (x_{k+1}, \theta_{k+1}) - \Lambda^{j-1} (x_{k+1}, \theta_{k+1}) |I_k]\geq 0 \label{eq:32}
\end{split}
\end{equation}
which proves that $ \jcost^{j+1} (x_{k+1}, \theta_{k+1}) \geq \Lambda^j (x_k, \theta_k)$. 
This inequality together with \eqref{eq:22} from Lemma~\ref{lemma1} implies $\jcost^{j+1} (x_k, \theta_k) \geq \jcost^j (x_k, \theta_k) $ for every $j\in\mathbb{Z}_0^+$ which proves the result.
\end{proof}

\begin{remark}
\label{rem:1}
Considering that the sequence of value functions defined in \eqref{eq:value_function} is non-decreasing and has an upper bound, it is concluded that the following limit exists 
\begin{equation}
\label{eq:jlim}
\lim_{i \to \infty} \jcost^j (x_k, \theta_k) = \jcost^\infty (x_k, \theta_k).
\end{equation} 
\end{remark}

\begin{theorem} 
\label{th:2}
Let $\jcost^j (x_k, \theta_k)$ denote the sequence of value functions given by \eqref{eq:value_function}. If the system $\gss$ in \eqref{eq:sys_dynamics} satisfies the assumptions~\ref{assum2}, \ref{assum3}, and \ref{assum4}, then the limit $\jcost^\infty (x_k, \theta_k)$ in \eqref{eq:jlim} is equal to the optimal value function $\jcost^* (x_k, \theta_k)$ and satisfies the HJB equation \eqref{eq:HJB} as well as the Bellman equation \eqref{eq:Bellman} for the optimal control policy $u_k = u^* (x_k, \theta_k)$.
\end{theorem}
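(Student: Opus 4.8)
The plan is to establish the two inequalities $\jcost^\infty(x_k,\theta_k)\le\jcost^*(x_k,\theta_k)$ and $\jcost^*(x_k,\theta_k)\le\jcost^\infty(x_k,\theta_k)$ and, along the way, to show that $\jcost^\infty$ satisfies the HJB and Bellman equations. The first inequality is almost immediate: by Lemma~\ref{lemma3} the sequence $\{\jcost^j\}$ is non-decreasing, by Lemma~\ref{lemma2} it is bounded above by $\jcost^*$, and by Remark~\ref{rem:1} the limit $\jcost^\infty$ in \eqref{eq:jlim} exists; hence $0\le\jcost^\infty(x_k,\theta_k)\le\jcost^*(x_k,\theta_k)$ for all $(x_k,\theta_k)$.

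The core step is to pass to the limit $j\to\infty$ in the combined value-iteration recursion \eqref{eq:vi_combined}. First I would show by induction on $j$, starting from $\jcost^0=0$, that every $\jcost^j(x,\theta)$ is a quadratic form $x^T P^j_\theta x$ with $P^j_\theta\ge 0$: substituting this form into \eqref{eq:vi_combined} and using Assumption~\ref{assum2} together with the operator $\mathcal{E}_{\theta_k}(\cdot)$ in \eqref{eq:operator_epsilon}, the bracketed term is a strictly convex quadratic in $u_k$ whose minimization produces a Riccati-type recursion for the $P^j$. Lemmas~\ref{lemma2} and \ref{lemma3} then give $P^j_\theta\uparrow P^\infty_\theta$ with $P^\infty_\theta$ finite, so $\jcost^\infty(x,\theta)=x^T P^\infty_\theta x$. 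On the right-hand side of \eqref{eq:vi_combined}, the term $\mathbb{E}[\jcost^j(x_{k+1},\theta_{k+1})\mid I_k]$ converges to $\mathbb{E}[\jcost^\infty(x_{k+1},\theta_{k+1})\mid I_k]$ by the monotone convergence theorem (the integrands are non-negative and non-decreasing in $j$), and since the bracketed expression is uniformly strictly convex in $u_k$ with Hessian bounded below by $2R_{\theta_k}>0$ and coefficients converging to those of the limit, the minimizers $u^j(x_k,\theta_k)$ and the minimum values converge. Taking limits yields
\[
\jcost^\infty(x_k,\theta_k)=\min_{u_k}\big[\, x_k^T Q_{\theta_k} x_k + u_k^T R_{\theta_k} u_k + \mathbb{E}[\jcost^\infty(x_{k+1},\theta_{k+1})\mid I_k]\,\big],
\]
i.e. $\jcost^\infty$ solves the HJB equation \eqref{eq:HJB}.

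For the reverse inequality, let $u^\infty(x_k,\theta_k)$ be the (linear-in-state) minimizing policy just obtained. Then $\jcost^\infty$ satisfies the Bellman equation \eqref{eq:Bellman} for this policy; iterating it forward over $k,k+1,\dots,k+N-1$ and dropping the non-negative terminal term $\mathbb{E}[\jcost^\infty(x_{k+N},\theta_{k+N})\mid I_k]\ge 0$ gives $\jcost^\infty(x_k,\theta_k)\ge\mathbb{E}\big[\sum_{i=k}^{k+N-1} x_i^T Q_{\theta_i} x_i + u_i^T R_{\theta_i} u_i \mid I_k\big]$ along the closed loop driven by $u^\infty$. Letting $N\to\infty$ and comparing with \eqref{eq:quadratic_value_function} yields $\jcost^\infty(x_k,\theta_k)\ge\jcost(x_k,\theta_k)\big|_{u=u^\infty}\ge\jcost^*(x_k,\theta_k)$, the last step by optimality of $\jcost^*$. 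Combined with $\jcost^\infty\le\jcost^*$ this gives $\jcost^\infty=\jcost^*$; hence $u^\infty$ attains the optimal cost, so $u^\infty=u^*$, and $\jcost^\infty$ satisfies both the HJB equation \eqref{eq:HJB} and the Bellman equation \eqref{eq:Bellman} for $u_k=u^*(x_k,\theta_k)$. As an alternative to the forward-iteration argument, one may instead observe that passing to the limit in the Riccati recursion shows $P^\infty$ solves the CARE \eqref{eq:riccati_equations}, whose positive-definite solution is unique by Theorem~\ref{th:1}, again forcing $\jcost^\infty=\jcost^*$.

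I expect the main obstacle to be the interchange of the limit with the minimization over $u_k$ in \eqref{eq:vi_combined}: over a non-compact control set the identity $\lim_j\min_u F_j(u)=\min_u\lim_j F_j(u)$ is not automatic. The remedy is exactly the quadratic structure — establishing $\jcost^j(x,\theta)=x^T P^j_\theta x$ with $P^j$ monotone and bounded makes $F_j(u_k)$ uniformly coercive (its quadratic part dominates $u_k^T R_{\theta_k} u_k$) with converging coefficients, which keeps the minimizers in a bounded set and lets the minimum values converge to the minimum of the limit.
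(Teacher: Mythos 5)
Your proposal is correct and follows the same overall strategy as the paper: sandwich $\jcost^\infty$ between $\jcost^\infty\le\jcost^*$ (from Lemma~\ref{lemma2} and monotone boundedness) and $\jcost^\infty\ge\jcost^*$ (by passing to the limit in \eqref{eq:vi_combined} and identifying $\jcost^\infty$ with the cost achieved by the limiting policy $u^\infty$). The differences are refinements rather than a different route, but they are worth noting because they patch two steps the paper leaves implicit. First, the paper simply asserts that ``for $j\to\infty$, equation \eqref{eq:vi_combined} becomes'' the fixed-point equation, without justifying the interchange of the limit with the minimization over the non-compact set of $u_k$; your argument via the quadratic structure $\jcost^j(x,\theta)=x^TP^j_\theta x$, monotone convergence of the conditional expectations, and the uniform lower Hessian bound $2R_{\theta_k}>0$ supplies exactly that justification. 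Second, the paper derives an \emph{equality} $\jcost^\infty(x_k,\theta_k)=\mathbb{E}\bigl[\sum_{i=k}^\infty(\cdots)\mid I_k\bigr]$ by summing the recursion forward, which tacitly requires the terminal term $\mathbb{E}[\jcost^\infty(x_{k+N},\theta_{k+N})\mid I_k]$ to vanish, i.e.\ mean-square stability of the closed loop under $u^\infty$ — a fact not established at that point. Your version only drops the non-negative terminal term to get the inequality $\jcost^\infty\ge\jcost\bigl|_{u=u^\infty}\ge\jcost^*$, which is all the argument needs and avoids that circularity. The alternative closing argument via uniqueness of the mean-square stabilizing CARE solution is also a valid shortcut consistent with Corollary~\ref{col:1}.
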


\begin{proof}
According to Lemma~\ref{lemma2} we have
\begin{equation}
\label{eq:jilejs}
\jcost^\infty (x_k, \theta_k) \le \jcost^* (x_k, \theta_k).
\end{equation}
For $j \to \infty$, the equation \eqref{eq:vi_combined} becomes
\begin{align}
\jcost^\infty (x_k, \theta_k) = & ~x_k^T Q_{\theta_k} x_k + u^\infty(x_k, \theta_k)^T R_{\theta_k} u^\infty(x_k, \theta_k) \nonumber\\[4pt]
& + \mathbb{E}[\jcost^\infty (x_{k+1}, \theta_{k+1}) | I_k].
\end{align}
Substituting $k$ in the above equation with $i$ for $i\ge k$, taking conditional expectation given $I_k$, summing over $i$, and simplifying the result, we arrive at
\begin{align}
\jcost^\infty (x_k, \theta_k) &=  \mathbb{E}\bigg[\sum_{i=k}^\infty [x_i^T Q_{\theta_i} x_i + \nonumber\\[-6pt]
& u^\infty(x_i, \theta_i)^T R_{\theta_i} u^\infty(x_i, \theta_i)] ~|~ I_k\bigg].
\end{align}

According to the above equation, $\jcost^\infty$ is $\jcost$ in \eqref{eq:quadratic_value_function} if we set $u_k=u^\infty(x_k, \theta_k)$. Hence, $\jcost^\infty$ connot be less than the optimal value of $\jcost$ and we have
\begin{equation}
\label{eq:jsleji}
\jcost^\infty (x_k, \theta_k) \ge \jcost^* (x_k, \theta_k).
\end{equation}

The two inequalities \eqref{eq:jilejs} and \eqref{eq:jsleji} together imply that $\jcost^\infty=\jcost^*$ and $u^\infty=u^*$. As a result, \eqref{eq:HJB} is satisfied by $\jcost^*=\jcost^\infty$ and \eqref{eq:Bellman} is satisfied by $\jcost=\jcost^\infty$ with $u_k=u^\infty(x_k,\theta_k)$ which completes the proof.
\end{proof}

\begin{corollary}
\label{col:1}
Let $\jcost^j (x_k, \theta_k)$ denote the sequence of value functions given by \eqref{eq:value_function}. If the system $\gss$ in \eqref{eq:sys_dynamics} satisfies the assumptions~\ref{assum2}, \ref{assum3}, and \ref{assum4}, then the positive definite matrices $P^j_i$ exist for $i\in\Theta$ and $j\in\mathbb{Z}_0^+$ such that 
\begin{subequations}
\begin{align}
\jcost^j (x, i) &= x^T P^j_i x \label{eq:pjq}\\[4pt]
P_{i}^{j+1}&=Q_{i} + A_{i}^T \mathcal{E}_{i}(P^j) A_{i} - A_{i}^T \mathcal{E}_{i}(P^j) B_{i}\times\nonumber\\
&\qquad (R_{i} + B_{i}^T \mathcal{E}_{i}(P^j) B_{i})^{-1} B_{i}^T \mathcal{E}_{i}(P^j) A_{i}\label{eq:pjr}\\[4pt]
\lim_{j \to \infty} P^j_{i} &= P_{i} \label{eq:42}
\end{align}
\end{subequations}
for all $i\in \Theta$ where $P_{i}$ are the solutions of the CARE in \eqref{eq:riccati_equations}.
\end{corollary}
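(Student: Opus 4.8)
The plan is to prove \eqref{eq:pjq} and \eqref{eq:pjr} together by induction on $j$, working directly from the combined value-iteration recursion \eqref{eq:vi_combined}, and then to deduce \eqref{eq:42} as a corollary of Theorem~\ref{th:2} and Theorem~\ref{th:1}. The base case is trivial: $\jcost^0(x,i)=0=x^T P^0_i x$ with $P^0_i=0$.

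For the inductive step, assume $\jcost^j(x,i)=x^T P^j_i x$ with $P^j_i$ symmetric and $P^j_i\ge 0$. Given $I_k$ and the value of $u_k$, the successor state $x_{k+1}=A_{\theta_k}x_k+B_{\theta_k}u_k$ is deterministic, so by Assumption~\ref{assum2} and the definition \eqref{eq:operator_epsilon} of the operator $\mathcal{E}_i$,
\begin{equation*}
\mathbb{E}[\jcost^j(x_{k+1},\theta_{k+1})\mid I_k]=(A_{\theta_k}x_k+B_{\theta_k}u_k)^T\,\mathcal{E}_{\theta_k}(P^j)\,(A_{\theta_k}x_k+B_{\theta_k}u_k).
\end{equation*}
Inserting this into \eqref{eq:vi_combined} reduces the inner minimization to an unconstrained convex quadratic program in $u_k$: since $R_i>0$ and $\mathcal{E}_i(P^j)\ge 0$, the Hessian $R_{\theta_k}+B_{\theta_k}^T\mathcal{E}_{\theta_k}(P^j)B_{\theta_k}$ is positive definite, hence invertible, and the unique minimizer is the mode-dependent linear feedback $u^j(x_k,\theta_k)=-K^j_{\theta_k}x_k$ with $K^j_i=(R_i+B_i^T\mathcal{E}_i(P^j)B_i)^{-1}B_i^T\mathcal{E}_i(P^j)A_i$. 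Substituting the minimizer back and completing the square yields $\jcost^{j+1}(x_k,\theta_k)=x_k^T P^{j+1}_{\theta_k}x_k$ with $P^{j+1}_i$ exactly as in \eqref{eq:pjr}, closing the induction. The matrix $P^{j+1}_i$ is symmetric by its defining formula, and by the matrix inversion lemma the Riccati-type correction term is positive semidefinite, so $P^{j+1}_i\ge Q_i\ge 0$; in particular $P^j_i\ge 0$ for every $j$, and $P^j_i>0$ for $j\ge 1$ whenever $Q_i>0$ (the natural hypothesis behind the positive definiteness asserted in the statement, which I would flag explicitly).

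For \eqref{eq:42} I would combine the first part with the convergence already established. By Remark~\ref{rem:1} and Theorem~\ref{th:2}, $\jcost^j(x,i)\to\jcost^\infty(x,i)=\jcost^*(x,i)$ for every $(x,i)$, and by \eqref{eq:opt_value} (equivalently \eqref{eq:opt_cost}) we have $\jcost^*(x,i)=x^T P_i x$ with $P_i$ the solution of the CARE \eqref{eq:riccati_equations}. Hence $x^T P^j_i x\to x^T P_i x$ for all $x$; evaluating these quadratic forms at the standard basis vectors and their pairwise sums, and using symmetry of $P^j_i$ and $P_i$, gives entrywise convergence $\lim_{j\to\infty}P^j_i=P_i$.

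I expect the inductive step to be the only nontrivial part: one must verify that the minimum in \eqref{eq:vi_combined} can be taken pointwise in $x_k$ so that $\jcost^{j+1}$ is again a pure mode-indexed quadratic form, that $R_i+B_i^T\mathcal{E}_i(P^j)B_i$ stays invertible along the whole iteration (this is precisely where $\mathcal{E}_i(P^j)\ge 0$ is used, which propagates from $P^j_i\ge 0$), and that the completion-of-the-square algebra reproduces \eqref{eq:pjr} verbatim. Everything else, including \eqref{eq:42}, is a direct translation of Theorem~\ref{th:2} and Theorem~\ref{th:1} into matrix coordinates.
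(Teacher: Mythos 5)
Your proposal follows essentially the same route as the paper, which only gives a sketch: induction on $j$ from the combined recursion \eqref{eq:vi_combined} with $P^0_i=0$ for \eqref{eq:pjq}--\eqref{eq:pjr}, and then \eqref{eq:42} from Theorem~\ref{th:2} together with $\jcost^*(x,i)=x^TP_ix$. Your additional observations --- the polarization argument turning convergence of quadratic forms into entrywise matrix convergence, and the flag that $P^0_i=0$ is only positive semidefinite so the stated positive definiteness really needs $Q_i>0$ (the paper assumes only $Q_i\ge 0$) --- are correct and fill in details the paper leaves implicit.
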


\begin{proof}[Sketch of the proof]
The proof of equations \eqref{eq:pjq} and \eqref{eq:pjr} can be established by induction using equation \eqref{eq:vi_combined} given that $\jcost^0=0$, which implies $P_{i}^0=0$. Consequently, \eqref{eq:42} is resulted from \eqref{eq:jlim} and Theorem~\ref{th:2} which necessitates $\jcost^\infty=\jcost^*$.
\end{proof}

Theorem~\ref{th:2} ensures the convergence of the sequence of value functions $\jcost^j$ for $j\in\mathbb{Z}_0^+$ to the optimal value function $\jcost^*$ throughout the algorithm iterations. 
Also, Corollary~\ref{col:1} ensures the convergence of the corresponding sequence of kernel matrices to the solution of the CARE in \eqref{eq:riccati_equations}. 
These result will be leveraged to the convergence proof of the Q-learning approach outlined in Section~\ref{section 4.4}. 

Value iteration is a computationally efficient method for solving the Riccati equations and finding the optimal controller for MJLS. However, this algorithm relies on having the knowledge of the system dynamics. To address this limitation, Q-learning is proposed as a comprehensive solution for model-free optimal control of MJLS in the next section.

\section{Q-learning}\label{section:Q-Learning}
In this section, we first construct the Q-function for model-free optimal control of MJLS. Then, we develop a mechanism for estimating the kernel matrix using the LS method. 
Unlike the model-based LQR and value iteration algorithm, the resulting Q-learning based MJLS optimal control design procedure depends solely on the input-state information of the system.
Finally, we prove that the proposed model-free controller converges to the model-based controller and we show that the estimator is unbiased against excitation noise. 

\subsection{Q-function for optimal control}
The stochastic Q-function represents the value associated with taking an action $u_k$ in a specific state $x_k$, mode $\theta_k$, and following the optimal policy after the time step $k$. It is defined as follows
\begin{equation}
\begin{split}
\mathbb{Q}(x_k,u_k,\theta_k) &= x_k^T Q_{\theta_k} x_k + u_k^T R_{\theta_k} u_k \\
&\qquad+\mathbb{E}\left[\ \jcost^*(x_{k+1},\theta_{k+1}) \ \, \middle| \, I_k\right].\label{eq:Q_function}
\end{split}
\end{equation}

The stochastic Q-function proposed above is closely related to the value function for system $\gss$. This relationship can be formulated by representing the HJB equation \eqref{eq:HJB} in terms of the Q-function as
\begin{equation}
\jcost^*(x_k,\theta_k) = \min_{u_k} \mathbb{Q}(x_k,u_k,\theta_k). \label{eq:bellman2}
\end{equation}

Subsequently, we have
\begin{equation}
u_k^* = \arg\min_{u_k} ~\mathbb{Q}(x_k,u_k,\theta_k). \label{eq:optimal_control}
\end{equation}

It is usual and useful to define a stage cost function such as below to write down the calculations more compactly.  
\begin{equation}
r(x,u,\theta)=x^T Q_\theta x + u^T R_\theta u .
\end{equation}

Substitution of \eqref{eq:opt_value} and \eqref{eq:sys_dynamics} into \eqref{eq:Q_function} while using the above notation results in
\begin{equation}
\begin{aligned}
\mathbb{Q}(x_k,u_k,\theta_k) &= r(x_k,u_k,\theta_k)+\left( A_{\theta_k} x_k + B_{\theta_k} u_k \right)^T \\
& \times \mathbb{E}\left[  P_{\theta_{k+1}} \, \middle| \, I_k \right] \left( A_{\theta_k} x_k + B_{\theta_k} u_k \right).
\end{aligned}
\end{equation}

Expanding the products and using \eqref{eq:operator_epsilon}, we have
\begin{equation}
\begin{aligned}
\mathbb{Q}(&x_k,u_k,\theta_k) = r(x_k,u_k,\theta_k) \\
& + x_k^T A_{\theta_k}^T \mathcal{E}_{\theta_k}(P) A_{\theta_k} x_k + u_k^T B_{\theta_k}^T \mathcal{E}_{\theta_k}(P) B_{\theta_k} u_k\\
&+ x_k^T A_{\theta_k}^T \mathcal{E}_{\theta_k}(P) B_{\theta_k} u_k+ u_k^T B_{\theta_k}^T \mathcal{E}_{\theta_k}(P) A_{\theta_k} x_k . 
\end{aligned}
\end{equation}

The Q-function can be written in terms of the augmented vector $z_k = [x_k^T ~~ u_k^T]^T$ as
\begin{equation}
\mathbb{Q}(x_k,u_k,\theta_k) = z_k^T H_{\theta_k} z_k. \label{eq:Q-function_quadratic}
\end{equation}
where $H_{\theta_k} = H_{\theta_k}^T \in \mathbb{R}^{l \times l}$, $l=n+m$ is denoted as the kernel matrix of the Q-function. Based on $(45)$ and $(46)$, the kernel matrix is decomposed as
\begin{subequations}
\label{eq:kernela}
\begin{align}
H_{\theta_k} =& \begin{bmatrix}
H_{\theta_k}^{xx} & H_{\theta_k}^{xu} \\
H_{\theta_k}^{ux} & H_{\theta_k}^{uu}
\end{bmatrix},\label{eq:kernel}\\[6pt]
H_{\theta_k}^{xx} &= Q_{\theta_k} + A_{\theta_k}^T \mathcal{E}_{\theta_k}(P) A_{\theta_k}, \\[3pt]
H_{\theta_k}^{xu} &= A_{\theta_k}^T \mathcal{E}_{\theta_k}(P) B_{\theta_k}, \\[3pt]
H_{\theta_k}^{ux} &= B_{\theta_k}^T \mathcal{E}_{\theta_k}(P) A_{\theta_k}, \\[3pt]
H_{\theta_k}^{uu} &= R_{\theta_k} + B_{\theta_k}^T \mathcal{E}_{\theta_k}(P) B_{\theta_k}.
\end{align}
\end{subequations}

It is straightforward to use the equations \eqref{eq:kernela} and \eqref{eq:control_gain} to write \eqref{eq:riccati_equations} as
\begin{equation}
\label{eq:ph}
P_{i} = \begin{bmatrix} I & K_{i}^T \end{bmatrix} H_{i} \begin{bmatrix} I \\ K_{i} \end{bmatrix}.
\end{equation}

We can now formulate the optimal controller using the proposed Q-function. 
According to \eqref{eq:optimal_control}, the optimal control policy \( u_k^* \) is determined from the following equation.
\begin{equation}
\frac{\partial}{\partial u} \mathbb{Q}^*(x_k,u^*_k,\theta_k) = 0 \label{eq:optimal_policy}
\end{equation}
which is solved based on \eqref{eq:Q-function_quadratic} and \eqref{eq:kernela} as
\begin{equation}
u_k^* = - (H_{\theta_k}^{uu})^{-1} H_{\theta_k}^{ux} x_k. \label{eq:optimal_policy_solution}
\end{equation}

In contrast to the well-established conventional approaches, if the information of Q-function is available, the the proposed method can be used  without the need for any prior knowledge of the system dynamics. Furthermore, the optimal control policy is achieved without the necessity to solve nonlinear CARE \eqref{eq:riccati_equations} in an offline manner.

The primary challenge within the proposed method centers on obtaining the kernel matrix \(H_{\theta_k}\) through the interaction of an intelligent agent with the unknown system $\gss$. In the next subsection, the policy evaluation and improvement steps are developed to learn the kernel matrix for the system $\gss$.

\subsection{Formulation of Q-learning}
In this subsection, a learning-based method is developed to achieve the optimal control policy using the input and state data for the system $\gss$. 
The RL-based optimal control method for MJLS is based on the equation \eqref{eq:Q_function}
which is used to estimate \(H_{i}\) for $i \in \Theta$ in policy evaluation phase using the collected system data.
Then, in policy improvement phase, the estimated kernel matrix is used to calculate the controller gains based on \eqref{eq:optimal_policy_solution} in a model-free framework. The iteration between policy evaluation and improvement continues until the convergence criterion is met. 

Denoting the iteration index by $j$, the policy evaluation phase of the Q-learning algorithm based on value iteration is implemented by writing \eqref{eq:Q_function} as
\begin{equation}
\begin{split}
\mathbb{Q}^{j+1}(x_k,u_k,\theta_k) &= r(x_k,u_k,\theta_k)\\
&+ \mathbb{E}[ J^{*j}(x_{k+1},\theta_{k+1})|I_k]. \label{eq:qfuniter}
\end{split}
\end{equation}
where $\mathbb{Q}^j$ and $\jcost^{*j}$ are the estimates of $\mathbb{Q}$ and $\jcost^{*}$ obtained at the $j$-th iteration respectively.

Using \eqref{eq:Q-function_quadratic}, \eqref{eq:opt_value}, the notation \eqref{eq:operator_epsilon}, and Assumption~\ref{assum2}, the equation \eqref{eq:qfuniter} can be written in terms of the kernel matrices as
\begin{equation}
\begin{split}
z_k^T H^{j+1}_{\theta_k} z_k &= r(x_k,u_k,\theta_k)
+ x_{k+1}^T \mathcal{E}_{\theta_k}(P^{j}) x_{k+1}.
\label{eq:qfuniterm}   
\end{split}
\end{equation}
where $H^j_i$ and $P^j_i$ are respectively the values of $H_i$ and $P_i$ obtained at the $j$-th iteration for $i\in\Theta$.

\subsection{Implementation of Algorithm}
In this subsection, we develop LS method to estimate the unknown kernel matrix using sampled data generated by MJLS. 
For this purpose, it is essential to separate the unknown kernel matrix at iteration $j+1$ in the left hand side of \eqref{eq:qfuniterm} as
\begin{equation}
z_k^T H^{j+1}_{\theta_k} z_k = \bar{z}_k^T \bar{H}^{j+1}_{\theta_k} \label{eq:parameterization}
\end{equation}
where \(\bar{H}^j_{\theta_k}, \bar{z}_k \in \mathbb{R}^{\frac{l(l+1)}{2}}\) are defined in terms of the elements of $H^j_{\theta_k}$ and $z_k$ as
\begin{align}
\bar{H}^j_{\theta_k} &= [h^j_{11}, 2h^j_{12}, \dots, 2h^j_{1l}, \nonumber\\
&\qquad h^j_{22}, \dots, 2h^j_{2l}, \dots, h^j_{ll}]^T,\\
\bar{z}_k &= [z_{k,1}^2, z_{k,1} z_{k,2}, \dots, z_{k,1}z_{k,l}, \nonumber\\ &\qquad z_{k,2}^2, \dots, z_{k,2}z_{k,l}, \dots, z_{k,l}^2]^T. \label{eq:regression_vector}
\end{align}

Due to the symmetry of the kernel matrix \(H^j_{\theta_k}\), its off-diagonal elements are weighted by a factor of 2.
Substituting \eqref{eq:parameterization} into \eqref{eq:qfuniterm} the result is
\begin{equation}
\bar{z}_k^T \bar{H}^{j+1}_{\theta_k} = r(x_k,u_k,\theta_k) + x_{k+1}^T \mathcal{E}_{\theta_k}(P^{j}) x_{k+1}.
\label{eq:substitution}
\end{equation}

The dataset for the LS method is collected through the interaction of an intelligent agent with the unknown system $\gss$ for the policy evaluation phase at each iteration. 
The data is collected such that \(L\) data samples are available for each mode value in $\Theta$. 
To be more precise, consider the dataset $(u_k,x_k,\theta_k)$ for $k\in\{1,2,\cdots,\bar L\}$ in the $j$-th iteration, where the iteration index $j$ is not indicated for brevity. The index variables $r_{i,\ell}$ for $i\in\theta$ and $\ell\in\{1,2,\cdots,M_i\}$ with $M_i\ge L$ are defined such that 
\begin{itemize}
\setlength{\itemindent}{7mm}
\item $r_{i,\ell_1}<r_{i,\ell_2}$ if and only if $\ell_1<\ell_2$. 
\item $\theta_k=i$ if and only if $k\in\{r_{i,1},r_{i,2},\cdots,r_{i,M_i}\}$. 
\end{itemize}

The samples of dataset for the $j$-th iteration are used to construct 
\begin{subequations}
\label{eq:phi and sie}
\begin{align}
\phi_i^j &= \begin{bmatrix}
\bar{z}_{r_{i,1}} &
\bar{z}_{r_{i,2}} &
\cdots &
\bar{z}_{r_{i,L}}
\end{bmatrix}^T, \\
\psi_i^j &= \begin{bmatrix}
r(x_{r_{i,1}},u_{r_{i,1}},i) + x_{r_{i,1}+1}^T \mathcal{E}_{i}(P^j) x_{r_{i,1}+1}\\
r(x_{r_{i,2}},u_{r_{i,2}},i) + x_{r_{i,2}+1}^T \mathcal{E}_{i}(P^j) x_{r_{i,2}+1} \\
\vdots \\
r(x_{r_{i,L}},u_{r_{i,L}},i) + x_{r_{i,L}+1}^T \mathcal{E}_{i}(P^j) x_{r_{i,L}+1}
\end{bmatrix}.
\end{align}
\end{subequations}

Then, the equation \eqref{eq:substitution} for \(k \in \{1,\ldots,L\}\) can be transformed into the following stochastic linear regression model
\begin{equation}
\label{eq:regrm}
\phi^{j}_i \bar{H}^{j+1}_i = \psi^{j}_i, \quad \forall i\in \Theta.
\end{equation}

According to the above equation, the sequence of estimations for the kernel matrix \( \hat{H}_i^{j+1} \) are obtained as
\begin{equation}
\hat{H}_i^{j+1} = \arg \min_{\bar{H}} \| \psi_i^j - \phi_i^j \bar{H} \|, \quad \forall i \in \Theta
\end{equation}
which is solved as
\begin{equation}\label{eq:Hhat}
\hat{H}_i^{j+1} = (\phi_i^{j^T} \phi_i^j)^{-1} \phi_i^{j^T} \psi_i^j, \quad \forall i \in \Theta.
\end{equation}

\begin{remark}
\label{rem:2}
The kernel matrix is symmetric with \( \frac{l(l+1)}{2} \) independent entries. Consequently, a minimum of \( L \geq \frac{l(l+1)}{2} \) data samples are required for each Markovian mode in order to use \eqref{eq:Hhat}.
According to Assumption~\ref{assum3}, it is ensured that each mode $i\in\Theta$ is visited for $L$ times if the total length of data samples during the $j$-th iteration of learning process denoted by $\bar L$ is large enough \citep{sayedana2024strong}. 
To derive a rule of thumb, one can consider the steady state distribution $\pi_\infty$ for the mode $\theta_k$ to estimate the probability distribution of $\bar L$ as
\begin{align}
\mathbb{P}[\bar L=\beta] =  \sum_{\mathclap{\substack{k_i\ge L,\\ \scalebox{1}{$\Sigma$}_{i=1}^N k_i=\beta}}} C_\beta(k_1,k_2,\cdots,k_N) \prod_{\ell=1}^N \pi_{\infty,\ell}^{k_i}
\end{align}
where $C_\beta(k_1,k_2,\cdots,k_N)$ is combination of $k_1,k_2,\cdots,k_N$ from $\beta$ (the number of ways to select $N$ sets of $k_1,k_2,\cdots~k_N$ places from a total of $\beta$ places). 
The above probability distribution can be used for example to estimate the average or standard deviation of the required data length $\bar L$.
\end{remark}

To guarantee the invertibility of \( \phi_i^T \phi_i \) in \eqref{eq:Hhat}, the input sequence is required to fulfill a persistent excitation condition according to the following definition.

\begin{definition}[\citealt{petreczky2023notion}] 
\label{defin4}
The control input sequence applied to the system $\gss$ is said to meet the persistence excitation condition if there exists a constant \( \epsilon > 0 \) such that
\begin{equation}
\sum_{k=0}^{L} u_k u_k^T \geq \epsilon I.
\end{equation}
\end{definition}

One way to satisfy the persistence excitation condition is to add an excitation noise \( n_k \) to the input signal, and apply $\hat u_k$ defined below as the input instead of $u_k$. 
\begin{equation}
\hat{u}_k = u_k + n_k. \label{eq:noisy_input}
\end{equation}

Adding noise according to the above equation does not cause any bias in estimation of the kernel matrix (see Remark~\ref{rem:3} in the next subsection).

Summarizing the whole procedure, the resulting MJLS Q-learning algorithm is presented as Algorithm~\ref{alg:q-learning}.

\begin{algorithm}
\caption{Q-Learning Algorithm}
\label{alg:q-learning}
\begin{algorithmic}[1]
\Statex \textbf{Initialization:} $K_i^0$ and $\bar{H}_i^0 \geq 0$ for each $i \in \Theta$.
\Statex \textbf{Parameter:} error bound $\varepsilon$ for stop condition.
\State $e_K:=\infty$
\While{$e_K>\varepsilon$}
    \State Acquire a dataset $(\hat u_k,x_k,\theta_k)$ of system $\gss$ 
    \Statex \hspace{6mm} with $k\in\{0,1,\cdots,\bar L-1\}$ which contains at 
    \Statex \hspace{6mm} least $L$ samples for each mode value.
    \For{$i\in\Theta$}
    \State Construct the matrices in \eqref{eq:phi and sie}.
    \State Calculate $\hat{H}_i^{j+1}$ using \eqref{eq:Hhat}.
    \State Calculate $K_i^{j+1}$ using \eqref{eq:optimal_policy_solution} and \eqref{eq:kernel}.
    \EndFor
    \State {$e_K:=\max_{i\in\Theta} \|K_i^{j+1} - K_i^{j}\|$}
\EndWhile
\end{algorithmic}
\end{algorithm}

\subsection{Convergence of algorithm}\label{section 4.4}
In this subsection, we investigate the convergence of the sequence of model-free controller gains in Algorithm~\ref{alg:q-learning} to the model-based optimal controller gains defined in Theorem~\ref{th:1}.
Then, we ensure the immunity of the estimation mechanism against the bias phenomenon.

\begin{theorem} 
\label{th:4}
For the mean-square stabilizable system $\gss$, the sequence of model-free optimal controller gains generated by the Q-learning algorithm \( K_{i}^j\) for $i \in \Theta$ and \( j\in\mathbb{Z}_0^+\), converges to the model-based optimal controller gains in \eqref{eq:control_gain} such that 
\begin{equation}
K_{i} = \lim_{j \rightarrow \infty} K_{i}^j .
\end{equation}
\end{theorem}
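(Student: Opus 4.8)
The plan is to connect the Q-learning iterates to the value-iteration iterates already analyzed in Section~\ref{section:ValueIterationAlgorithm}, and then invoke Corollary~\ref{col:1}. The bridge is equation~\eqref{eq:ph}, which expresses the Riccati iterate $P_i^j$ as a quadratic form in the kernel matrix $H_i^j$ and the gain $K_i^j$. First I would establish that, under the persistence-excitation condition of Definition~\ref{defin4} together with Assumption~\ref{assum3} (ergodicity, which guarantees each mode is visited at least $L\ge\frac{l(l+1)}{2}$ times in a sufficiently long trajectory per Remark~\ref{rem:2}), the regression matrix $\phi_i^j$ has full column rank, so the least-squares solution~\eqref{eq:Hhat} is well-defined and exact: since the regression model~\eqref{eq:regrm} is an identity rather than a noisy model, $\hat H_i^{j+1}=\bar H_i^{j+1}$, i.e. the LS estimate recovers the true kernel iterate with no residual. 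This is the key reduction: the stochastic Q-learning update coincides deterministically with the kernel recursion induced by~\eqref{eq:qfuniterm}.

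Next I would unwind that kernel recursion into the Riccati recursion. Starting from $\bar H^0_i$ and $K^0_i$ (the initialization in Algorithm~\ref{alg:q-learning}), define $P^j_i := [\,I\ \ K_i^{j^T}\,]H^j_i[\,I\ \ K_i^{j}\,]^T$ following~\eqref{eq:ph}. Substituting the block decomposition~\eqref{eq:kernela} and the policy-improvement formula~\eqref{eq:optimal_policy_solution}, which gives $K_i^{j+1}=(H_i^{uu,j+1})^{-1}H_i^{ux,j+1}$, a routine block-matrix computation (completing the square in the $uu$ block, exactly as in the derivation of~\eqref{eq:riccati_equations} from~\eqref{eq:kernela}) shows that the $P_i^j$ so defined satisfy precisely the value-iteration Riccati recursion~\eqref{eq:pjr}, with $\mathcal{E}_i(P^j)$ assembled from the known transition probabilities (Assumption~\ref{assum1}). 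Hence the Q-learning iterates generate the same sequence $\{P_i^j\}$ as the model-based value iteration of Corollary~\ref{col:1}, provided the initialization is matched; for a general initialization $\bar H_i^0\ge 0$ one argues the recursion from $j=1$ onward is insensitive to the choice of $P_i^0$ in the limit, since Theorem~\ref{th:2} / Corollary~\ref{col:1} give $P_i^j\to P_i$ for the value-iteration initialization and the same monotone-convergence argument applies to any nonnegative start dominated by the stabilizing upper bound $Y$ of Lemma~\ref{lemma2}.

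Finally, Corollary~\ref{col:1} yields $P_i^j\to P_i$, the CARE solution. Continuity of the map $P^j\mapsto\mathcal{E}_i(P^j)\mapsto K_i^{j+1}$ in~\eqref{eq:control_gain} / \eqref{eq:optimal_policy_solution} — valid because $R_i>0$ keeps $H_i^{uu,j}=R_i+B_i^T\mathcal{E}_i(P^j)B_i$ uniformly positive definite, so the inverse is continuous — then gives $K_i^j\to (R_i+B_i^T\mathcal{E}_i(P)B_i)^{-1}B_i^T\mathcal{E}_i(P)A_i = K_i$, which is the claim.

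I expect the main obstacle to be the handling of the \emph{initialization}: Corollary~\ref{col:1} is stated only for the value-iteration sequence started from $\jcost^0=0$ (i.e. $P_i^0=0$), whereas Algorithm~\ref{alg:q-learning} starts from an arbitrary $\bar H_i^0\ge 0$. Closing this gap cleanly requires either restating the convergence of the Riccati recursion~\eqref{eq:pjr} for arbitrary nonnegative initial conditions (a standard but nontrivial fact for coupled Riccati iterations, provable by sandwiching between the zero-initialized sequence from below via a monotonicity argument like Lemma~\ref{lemma3} and the stabilizing bound $Y$ from above via Lemma~\ref{lemma2}), or simply imposing $\bar H_i^0=0$. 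The secondary technical point is verifying rigorously that persistence of excitation on $\hat u_k=u_k+n_k$ (Definition~\ref{defin4}) propagates to full column rank of $\phi_i^j$, whose rows are the quadratic lifts $\bar z_k$ of the augmented state-input vectors; this needs a genericity/nondegeneracy argument on the collected samples per mode, and is the place where the "no bias from excitation noise'' claim (Remark~\ref{rem:3}) enters, since the noise appears inside $z_k$ on both sides of the exact identity~\eqref{eq:substitution} and therefore cancels rather than biasing the estimate.
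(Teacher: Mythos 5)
Your proposal follows essentially the same route as the paper's own proof: identify the kernel recursion $H_i^{j+1}$ from the quadratic identity \eqref{eq:qfuniterm} using the system dynamics, pass to the Riccati recursion for $P_i^{j+1}$ via \eqref{eq:ph} and the policy-improvement formula \eqref{eq:optimal_policy_solution}, match it with \eqref{eq:pjr}, and invoke Corollary~\ref{col:1} plus continuity of \eqref{eq:81} to conclude $K_i^j\to K_i$. The two technical points you flag --- the mismatch between the zero initialization assumed in Corollary~\ref{col:1} and the arbitrary $\bar H_i^0\ge 0$ permitted by Algorithm~\ref{alg:q-learning}, and the need to verify that persistent excitation yields full column rank of the lifted regressor matrix $\phi_i^j$ so that the LS solution exactly recovers the kernel iterate --- are genuine gaps that the paper's proof also leaves unaddressed, so raising them is a strength of your write-up rather than a deviation from the intended argument.
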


\begin{proof}
Substituting $z_k$ in \eqref{eq:qfuniterm} with $[x_k^T ~~ u_k^T]^T$ and replacing $x_{k+1}$ from the system equation \eqref{eq:sys_dynamics} we have
\begin{equation}
\begin{split}
&\begin{bmatrix} x_k^T & u_k^T \end{bmatrix} H_{i}^{j+1} \begin{bmatrix} x_k \\ u_k \end{bmatrix} \\
&= \begin{bmatrix} x_k^T & u_k^T \end{bmatrix} \begin{bmatrix} Q_{i} & 0 \\ 0 & R_{i} \end{bmatrix} \begin{bmatrix} x_k \\ u_k \end{bmatrix} \\
&+ \begin{bmatrix} x_k^T & u_k^T \end{bmatrix} \begin{bmatrix} A_{i}^T & B_{i}^T \end{bmatrix} \mathcal{E}_{i} (P^j) \begin{bmatrix} A_{i} & B_{i} \end{bmatrix} \begin{bmatrix} x_k \\ u_k \end{bmatrix}.
\end{split}\label{eq:78}
\end{equation}

Since the above equation is valid for all values of \(x_k\) and \(u_k\), one can obtain
\begin{align}
H_{i}^{j+1} &= \begin{bmatrix}
Q_{i} & 0 \\
0 & R_{i}
\end{bmatrix}
+ \begin{bmatrix}
A_{i} & B_{i}
\end{bmatrix}^T \mathcal{E}_{i} (P^j) \begin{bmatrix}
A_{i} & B_{i}
\end{bmatrix} \nonumber\\
&= \begin{bmatrix}
Q_{i} + A_{i}^T \mathcal{E}_{i} (P^j) A_{i} & A_{i}^T \mathcal{E}_{i} (P^j) B_{i} \\
B_{i}^T \mathcal{E}_{i} (P^j) A_{i} & R_{i} + B_{i}^T \mathcal{E}_{i} (P^j) B_{i}
\end{bmatrix}.
\label{eq:79}
\end{align}

Using \eqref{eq:ph}, we have
\begin{align}
P_{i}^{j+1} &=\begin{bmatrix}
I & K_{i}^{{j+1}^T}
\end{bmatrix}H_{i}^{j+1}\begin{bmatrix}
I \\ K_{i}^{j+1}
\end{bmatrix}.
\label{eq:80}
\end{align}

Also, \eqref{eq:79} in combination with \eqref{eq:optimal_policy_solution} gives
\begin{equation}
\begin{split}\label{eq:81}
K_{i}^{j+1} &= (R_{i} + B_{i}^T \mathcal{E}_{i}(P^j) B_{i})^{-1} B_{i}^T \mathcal{E}_{i}(P^j) A_{i}.
\end{split}
\end{equation}

Substituting \eqref{eq:79} and \eqref{eq:81} into \eqref{eq:80}, results in
\begin{align}
&P_{i}^{j+1}=Q_{i} + A_{i}^T \mathcal{E}_{i}(P^j) A_{i} \nonumber\\
&- A_{i}^T \mathcal{E}_{i}(P^j) B_{i} (R_{i} + B_{i}^T \mathcal{E}_{i}(P^j) B_{i})^{-1} B_{i}^T \mathcal{E}_{i}(P^j) A_{i}.
\end{align}

The above equation is the same as \eqref{eq:pjr} in Corollary~\ref{col:1}, which necessitates $\lim_{j \to \infty} P_{i}^j=P_{i}$ according to \eqref{eq:42}. 
As a result \eqref{eq:81}, implies convergence of $K^j_i$ to $K_i$ in \eqref{eq:control_gain} which completes the proof.
\end{proof}

\begin{remark}
\label{rem:3}
The addition of noise according to \eqref{eq:noisy_input} does not affect the correctness or convergence of Algorithm~\ref{alg:q-learning}. The reason is that in the proof of Theorem~\ref{th:4}, $u_k$ in Equation \eqref{eq:78} can be easily replaced with $\hat{u}_k$ without affecting the derivation of \eqref{eq:79}. The rest of the proof is independent of the choice of input and follows as before.
\end{remark}

\begin{remark}
\label{rem:4}
In accordance with Theorem~\ref{th:4}, the model-free controller gains converge towards the conventional optimal controller gains \eqref{eq:control_gain}, which are derived through the solution of Riccati equations \eqref{eq:riccati_equations}. Consequently, the Theorem~\ref{th:1} guarantees the stability of the system \eqref{eq:sys_dynamics} controlled using the proposed algorithm in the mean-square sense of Definition~\ref{defin2}, without requiring prior knowledge of the system dynamics.
\end{remark}

\section{Simulation analysis}\label{section:Simulation}
In this section, the performance of the proposed model-free optimal controller for  discrete-time MJLS is evaluated under various operating conditions, including different excitation noise sequences. The aim is to assess how the proposed model-free controller utilizes raw data to regulate the state of an MJLS. Finally, the convergence of the model-free controller to the model-based controller gains, as well as the closed-loop stability of the system, is investigated. For this purpose, a discrete-time MJLS with two Markovian modes is adopted from \cite{wan2023self}, with system matrices  
\begin{align*}
A_1 &= \begin{bmatrix} -0.5 & 1 \\ 0.8 & 0.5 \end{bmatrix},\quad A_2 = \begin{bmatrix} 0.6 & -0.1 \\ 0.4 & -1 \end{bmatrix}\\
B_1 &= \begin{bmatrix} 1 \\ 2 \end{bmatrix}, \quad B_2 = \begin{bmatrix} 1 \\ 1 \end{bmatrix}    %\\ C_1 &= \begin{bmatrix} 0.2 & 0.4 \end{bmatrix}, \quad C_2 = \begin{bmatrix} 0.5 & 0.4 \end{bmatrix}
\end{align*}
in which $A_1$ is an unstable matrix. The transition probability matrix is also considered to be 
\begin{equation*}
\Phi = \begin{bmatrix} 0.7 & ~~0.3 \\ 0.5 & ~~0.5 \end{bmatrix}.
\end{equation*}

The initial state is assumed to be \( x_0 = [1 ~~ 0]^T \) and the initial mode is considered as \( \theta_0 = 1 \). The simulation tests are performed for 200 time steps. The transitions of the Markovian mode for 50 samples are presented in Fig.~\ref{fig:1}.
\begin{figure}[H]
    \centering
    \includegraphics[width=\linewidth]{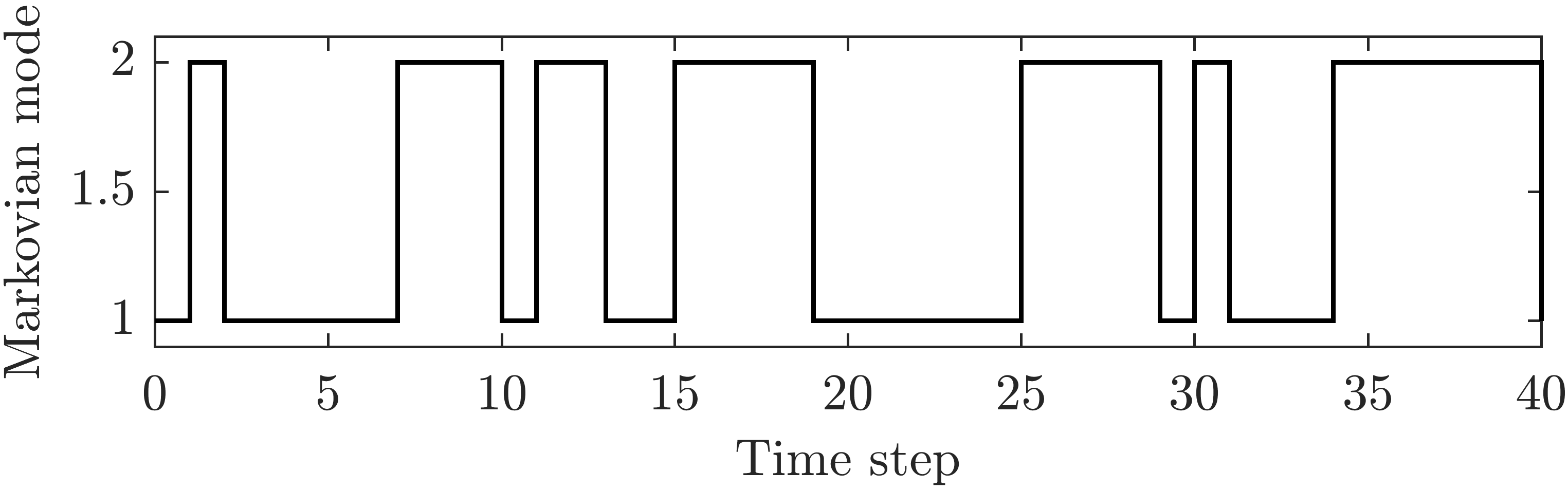}
    \caption{Transitions of the Markovian mode}
    \label{fig:1}
\end{figure}
The weighting matrices of Q-function are specified as \( Q_1 = Q_2 = 5I \) and \( R_1 = R_2 = 1 \). Based Theorem~\ref{th:1}, the model-based optimal controller gains are calculated using Corollary~\ref{col:1} as
\begin{align*}
K_1 &= \big[0.354 ~~~~ 0.273\big] \\
K_2 &= \big[0.460 ~~~ -0.551\big]
\end{align*}

In each iteration, $L=15$ data samples are collected to estimate the kernel matrix for each mode using the LS method during the policy evaluation phase. The stop condition parameter of Algorithm~\ref{alg:q-learning} is set to \( \epsilon = 0.001 \). The algorithm is initialized as 
\begin{align*}
\bar{H}_1^0 &= \bar{H}_2^0 = \big[ 0 ~~ 0 ~~ 0 ~~ 0 ~~ 0 ~~ 0  \big]^T \\
K_1^0 &= K_2^0 = \big[0 ~~~ 0\big] 
\end{align*}

The performance of the closed-loop system under the proposed model-free controller is illustrated in Fig.~\ref{fig:2}. In this simulation, the excitation noise is characterized as a Gaussian process with zero mean and standard deviation of 0.01. The system response is plotted for controller gains at iterations $0, 2, 5, 25$. According to this figure, the control performance quickly approaches to the optimal case as the iteration count increases. The convergence takes place in 25 iterations. The controller at final iteration optimally and effectively regulates the system states without prior knowledge of system dynamics or the need to solve the Riccati equations.
The controller gain elements are plooted in Fig.~\ref{fig:3} where the $j$-th element of controller gain for mode $i\in\Theta$ is denoted as $K_{i,j}$. This figure also shows the rapid convergence of the model-free controller gains to the model-based optimal controller gains as stated in Theorem~\ref{th:4}.

\begin{figure}
    \centering
    \includegraphics[width=\linewidth]{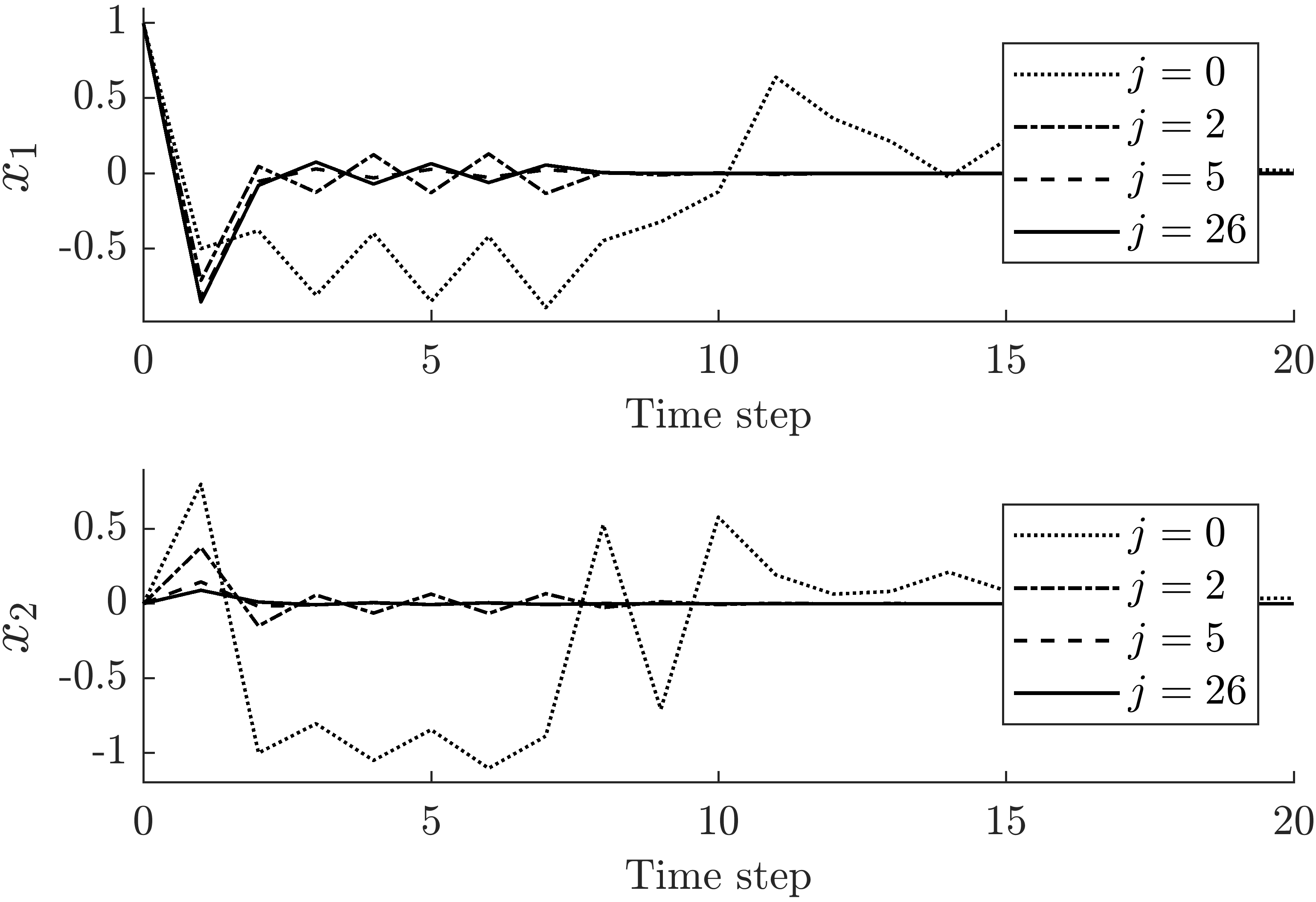}
    \caption{Model-free controller during iterations.}
    \label{fig:2}
\end{figure}

\begin{figure}
    \centering
    \includegraphics[width=\linewidth]{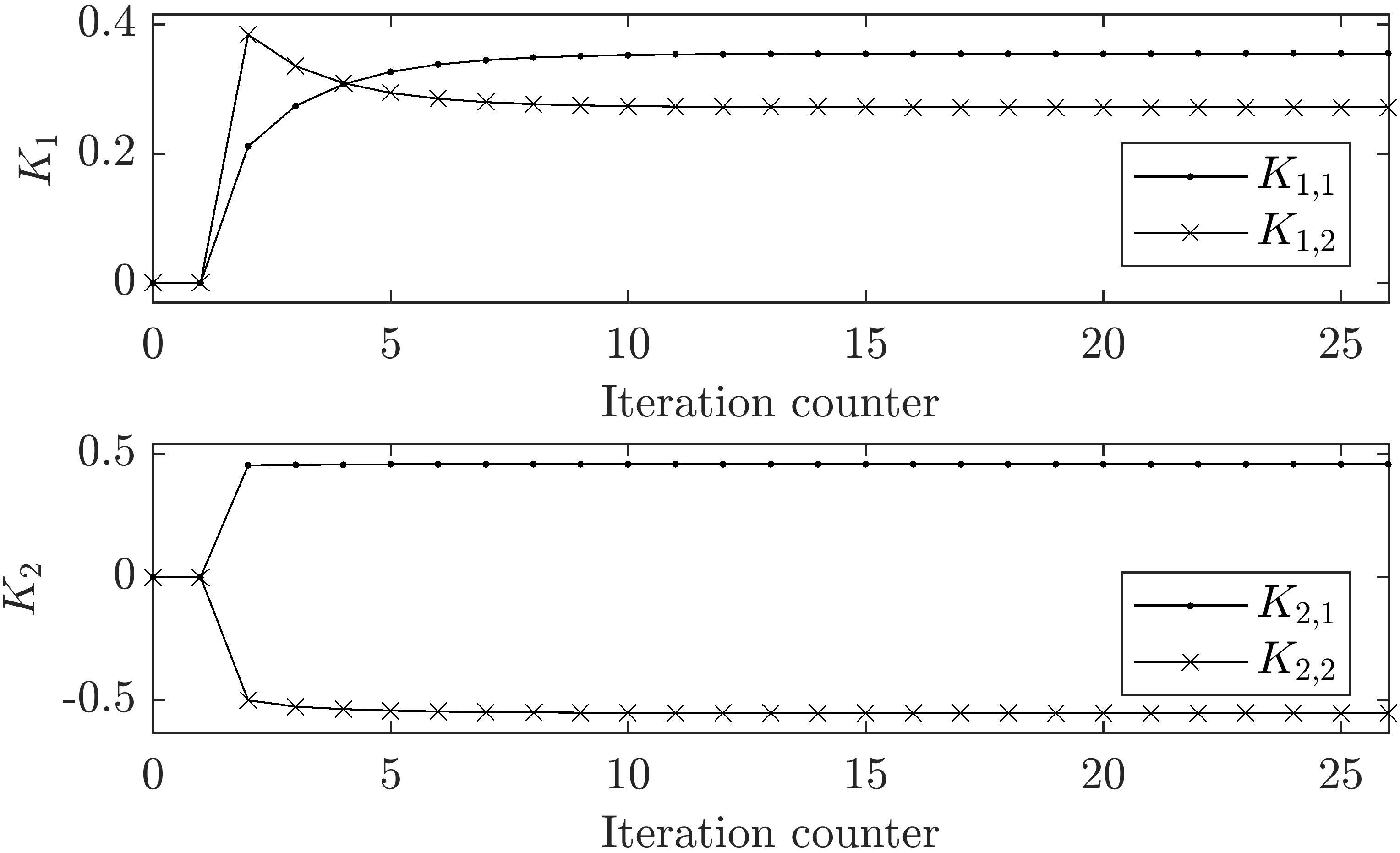}
    \caption{Convergence of Controller gains during learning.}
    \label{fig:3}
\end{figure}

\section{Conclusion}\label{section:conclusion}
In this research, we use RL concepts to develop a model-free controller for discrete-time MJLS based on the data exchanged between the unknown plant and intelligent agent. It is proved that the model-free controller converges to the model-based optimal controller. It is also shown that inclusion of excitation noise does not lead to any bias during the learning process.  Therefore, the stability of closed-loop MJLS under the proposed model-free controller is guaranteed. Finally, simulation study is carried out to evaluate the performance of the proposed controller. This research can be extended in several ways, including model-free controller design for MJLS with partially measured state or mode variables. 
\bibliographystyle{apalike}       
\bibliography{autosam}   
\end{document}